\documentclass[12pt]{article}

\usepackage{amssymb,amsmath,amsfonts,amssymb}
\usepackage{graphics,graphicx,color,hhline}
\usepackage{bbm} 
\usepackage{euscript}  
\usepackage{authblk}
 \usepackage{mathtools}
\usepackage{cases}
\usepackage{hyperref}
%%%%%%%%%%%%%%%% 
%%% COMMANDS %%% 
%%%%%%%%%%%%%%%%  
\def\@abssec#1{\vspace{.05in}\footnotesize \parindent .2in 
{\bf #1. }\ignorespaces} 
%proof 

\graphicspath{{/EPSF/}{Figures/}}   
%\DeclareGraphicsExtensions{.eps} 
%\DeclareGraphicsExtensions{.pdf,.jpg}   

\setlength\textwidth{37.2pc}  
\setlength\textheight{56pc} 
\setlength\topmargin{-12pt} 
\addtolength\oddsidemargin{-1.2cm}
\addtolength\evensidemargin{-1.2cm}
%\addtolength\textwidth{4.3cm}
\addtolength\topmargin{-1.5cm}
%\addtolength\textheight{3.cm} 

\newtheorem{theorem}{Theorem}[section]
\newtheorem{lemma}[theorem]{Lemma}
\newtheorem{proposition}[theorem]{Proposition}

\newtheorem{remark}[theorem]{Remark}

\def \Rm {\mathbb R}
\def \Nm {\mathbb N}

\def \NN {\mathbb N}

\newcommand{\eps}{\varepsilon}

\newcommand{\be}{\begin{equation}}
\newcommand{\ee}{\end{equation}}
\newcommand{\bea}{\begin{eqnarray}}
\newcommand{\eea}{\end{eqnarray}}
\newcommand{\bee}{\begin{eqnarray*}}
\newcommand{\eee}{\end{eqnarray*}}

\def\fref#1{{\rm (\ref{#1})}}
\newcommand{\calQ}{\mathcal Q}

\newcommand{\calL}{\mathcal L}

\newcommand{\calE}{\mathcal E}

\newcommand{\calJ}{\mathcal J}
\newcommand{\calK}{\mathcal K}
\newcommand{\calA}{\mathcal A}

\newcommand{\frakH}{\mathfrak{H}}

\newcommand{\cout}[1]{}

\DeclarePairedDelimiter\bra{\langle}{\rvert}
\DeclarePairedDelimiter\ket{\lvert}{\rangle}

\newcommand{\vs}{\varrho_\star}

\DeclareMathOperator{\Tr}{Tr}

%%%%%%%%%%%%%%%%%%%%%%
%%% BEGIN DOCUMENT %%%
%%%%%%%%%%%%%%%%%%%%%%
\begin{document}
{\title{A constrained optimization problem in quantum statistical physics}}

 \author{Romain  Duboscq \footnote{Romain.Duboscq@math.univ-tlse.fr}}
 \affil{Institut de Math\'ematiques de Toulouse ; UMR5219\\Universit\'e de Toulouse ; CNRS\\INSA, F-31077 Toulouse, France}
 \author{Olivier Pinaud \footnote{pinaud@math.colostate.edu}}
 \affil{Department of Mathematics, Colorado State University\\ Fort Collins CO, 80523}

\maketitle

\begin{abstract}
In this paper, we consider the problem of minimizing quantum free energies under the constraint that the density of particles is fixed at each point of $\Rm^d$, for any $d\geq 1$. We are more particularly interested in the characterization of the minimizer, which is a self-adjoint nonnegative trace class operator, and will show that it is solution to a nonlinear self-consistent problem. This question of deriving quantum statistical equilibria is at the heart of the quantum hydrodynamical models introduced by Degond and Ringhofer in \cite{DR}. An original feature of the problem is the local nature of constraint, i.e. it depends on position, while more classical models consider the total number of particles in the system to be fixed. This raises difficulties in the derivation of the Euler-Lagrange equations and in the characterization of the minimizer, which are tackled in part by a careful parametrization of the feasible set.

% In this paper, we characterize the minimizer of quantum free energies under a constraint of local density on $\mathbb{R}^d$, $d\geq 1$. We rigorously obtain a "quantum Maxwellian" through a self-consistent eigenvalue problem and under minimal assumptions on the local density. This characterization problem is linked to the construction of quantum statistical equilibria at the heart of the quantum hydrodynamical models of Degond and Ringhofer in \cite{DR}.
\end{abstract}

\section{Introduction}
This work is concerned with the minimization of quantum free energies of the form
\be
\label{freeE}
F(\varrho)=\Tr(H \varrho)+ T \Tr(\beta(\varrho)),
\ee
where $\varrho$ is a density operator, i.e. a self-adjoint, trace class, and nonnegative operator on some Hilbert space, $H$ is a given Hamiltonian, $T$ the temperature, $\Tr(\cdot )$ denotes the operator trace, and $\beta$ is an entropy function, for instance the Boltzmann or the Fermi-Dirac entropy. The free energy $F(\varrho)$ is minimized under a constraint of local density, namely the density of particles is prescribed at each point of space: if $\rho(x,y)$ is the integral kernel associated with the operator $\varrho$, then the local density, defined as $\rho(x,x)$, is fixed and equal to a given function. 

The problem considered here is the building block of the quantum hydrodynamical models introduced by Degond et al in \cite{DR}. Their strategy consists in adapting to the quantum setting the moments closure method by entropy minimization that was developed by Levermore in the context of kinetic equations \cite{levermore}. This requires the construction of quantum statistical equilibra, which are obtained by minimizing $F(\varrho)$ under appropriate constraints. We focus in this work on the local density constraint (i.e. the zero order moment of $\varrho$) explicited above, which leads to the so-called quantum drift-diffusion model, see \cite{QET}. Different models can be obtained by considering additional constraints, in particular the local current and energy constraints (first and second order moments), which lead to the quantum Euler or quantum Navier-Stokes equations. We refer to \cite{QHD-review, QHD-CMS} for more details. See also e.g. \cite{jungelbook,jungel-matthes-milisic,jungel-matthes}
for additional references on quantum hydrodynamics.

At the mathematical level, it is proved in \cite{MP-KRM}, for $H=-\Delta+V$ defined on $L^2(\Rm^d)$, with $d \geq 1$ and $V$ a given potential, that $F(\varrho)$, with $\beta$ the Boltzmann or the Fermi-Dirac entropy, admits a unique minimizer under the constraint $\rho(x,x)=n(x)$, where $n$ is nonnegative and verifies
\be \label{hypn}
\sqrt{n} \in H^1(\Rm^d), \qquad \int_{\Rm^d} n(x)dx=1, \qquad n \log n \in L^1(\Rm^d).
\ee
The first condition above is necessary for the energy to be finite (i.e. the first term in the definition of $F(\varrho)$). The second condition is not crucial and can be modified. The proof is based on compactness and convexity methods. An important ingredient is a logarithmic Sobolev inequality for systems that yields a bound from below for the free energy. This requires the third condition in \eqref{hypn}, which prevents leakage of particles at the infinity. Without this condition, the free energy is not bounded below and the minimization problem does not admit a solution. The reference \cite{MP-KRM} addresses in addition a local current constraint, while (local) density, current and energy constraints are considered in \cite{DP-JMPA} in a one-dimensional setting in a bounded (periodic) domain. The energy constraint is difficult to handle in that there is no sufficient compactness on the  minimizing sequences to directly pass to the limit in the constraint, and one has to resort to subtle monotonicity arguments inspired by thermodynamics to conclude.

Knowing from \cite{MP-KRM} that a minimizer exists and is unique, we are interested in this work in its characterization. This is actually a quite more difficult problem than just establishing well-posedness. Formal calculations, performed e.g. in \cite{QHD-review} in the case of the Boltzmann entropy and when $V=0$, yield that the minimizer $\vs$ satisfies the following self-consistent relation 
\be \label{quantM}
\vs=\exp(-H[\vs]/T),
\ee
where $H[\vs]$ is an Hamiltonian of the form
$$
H[\vs]=-\Delta+A[\vs], \qquad A[\vs]=\frac{\Delta n}{2 n}-\frac{T n[\vs \log (\vs)]+k[\vs]}{n}.
$$
Above, $n[\vs \log \vs]$ and $k[\vs]$ are respectively the local entropy and local kinetic energy, defined by
\be \label{local}
n[\vs \log (\vs)](x)=\sum_{p \in \NN} \rho_p \log (\rho_p) |\phi_p(x)|^2, \qquad k[\vs](x)=\sum_{p \in \NN} \rho_p |\nabla \phi_p(x)|^2,
\ee
where $\{\rho_p\}_{p \in \NN}$ and $\{\phi_p\}_{p \in \NN}$ are the eigenvalues and the eigenvectors of $\vs$. Throughout the paper, the eigenvalues are counted with multiplicity and form a nonincreasing nonnegative sequence that accumulates at zero. At this stage, the kernel of $\vs$ could be zero, finite, or infinite, but we will prove that it is actually zero. The solution $\vs$ obtained in \eqref{quantM} is referred to in \cite{QHD-review} as the ``quantum Maxwellian'', and $A[\vs]$ is the chemical potential. In \cite{MP-JSP}, in a periodic one-dimensional domain $\Omega$, it is proved under the assumptions that $n$ is uniformly bounded from below, i.e. $n(x) \geq \underline{n}>0$ a.e., and that $n \in H^1(\Omega)$, that the Hamiltonian $H[\vs]$ is self-adjoint in the sense of quadratic forms. This is possible since the hypotheses \eqref{hypn} on $n$ eventually lead to  $A[\vs] \in H^{-1}(\Omega)$, which, in one dimension only in general, allows one to define $H[\vs]$ in the sense of quadratic forms. In the case where the spatial domain is $\Rm^d$, the condition $n(x)\geq \underline{n}>0$ is not compatible with $n \in L^1(\Rm^d)$, and even if we had $A[\vs] \in H^{-1}(\Rm^d)$, this is in general too low a regularity to construct a self-adjoint operator using classical results such as the KLMN theorem \cite{RS-80-2} for instance. One of the main difficulties is therefore to give a proper meaning for \fref{quantM} for the low regularity self-consistent potential $A[\vs]$. One could consider adding regularity conditions on $\Delta n$ for instance, but on the one hand it is unclear how this improves the regularity of the  self-consistent terms $n[\vs \log \vs]$ and $k[\vs]$, and on the other any additional assumptions to \fref{hypn} are not natural since they are not necessary for the existence theory.

The main result of this work is to rigorously define \fref{quantM} under the minimal assumptions \fref{hypn}. We will characterize $\{\rho_p\}_{p \in \NN}$ and $\{\phi_p\}_{p \in \NN}$ and show they are obtained by minimizing an appropriate quadratic form whose closure is $- \log (\vs)$. The method of proof is based on a proper rewriting of the chemical potential $A[\vs]$ and on exploiting the obtained particular form. While a potential with a regularity as low as that of $A[\vs]$ would not in general lead to a self-adjoint operator, it is the distinct structure of $A[\vs]$ inherited from the minimization problem that allows us to justify \fref{quantM}.

The article is structured as follows: we present our main result in Section \ref{secmain}; the proof is broken down into several parts in Section \ref{proofmainth}, and the proofs of some technical lemmas are given in Section \ref{lemproof}.

\paragraph{Acknowledgments.} OP's work is supported by NSF CAREER Grant DMS-1452349. 

\section{Main result} \label{secmain}

We start by introducing some notation. 
\paragraph{Notation.} We write $(\cdot,\cdot)$ for the inner product on $L^2(\Rm^d)$, with the convention $(f,g)=\int_{\Rm^d} \overline{f} g dx$, and $\|\cdot\|$ for the corresponding norm. The free Hamiltonian $-\Delta$ is denoted by $H_0$, equipped with the domain $H^2(\Rm^d)$. $\calL(L^2(\Rm^d))$ is the space of bounded operators on $L^2(\Rm^d)$, $\calJ_1 \equiv \calJ_1(L^2(\Rm^d))$ is the space of trace class operators and $\calJ_2$ the space of Hilbert-Schmidt operators, both on $L^2(\Rm^d)$. In the sequel, we will refer to a density operator as a self-adjoint, trace class, nonnegative operator on $L^2(\Rm^d)$. For $|\varrho|=\sqrt{\varrho^* \varrho}$, we introduce the following space:
$$\calE=\left\{\varrho\in \calJ_1:\, \overline{\sqrt{H_0}|\varrho|\sqrt{H_0}}\in \calJ_1\right\},$$
where $\overline{\sqrt{H_0}|\varrho|\sqrt{H_0}}$ denotes the extension of the operator $\sqrt{H_0}|\varrho|\sqrt{H_0}$ to $L^2(\Rm^d)$. We will drop the extension sign in the sequel to ease notation. The space $\calE$ is a Banach space when endowed with the norm
$$\|\varrho\|_{\calE}=\Tr \big(|\varrho| \big)+\Tr\big(\sqrt{H_0}|\varrho|\sqrt{H_0}\big),$$
where $\Tr(\cdot)$ denotes the operator trace. Finally, the energy space is the following closed convex subspace of $\calE$:
$$\calE_+=\left\{\varrho\in \calE:\, \varrho\geq 0\right\}.$$
The local kinetic energy of $\varrho \in \calE_+$ is defined by
$$
k[\varrho](x)=\sum_{p \in \NN} \lambda_p |\nabla \psi_p(x)|^2, \qquad \textrm{with} \qquad \|k[\varrho]\|_{L^1}=\Tr\big(\sqrt{H_0}\varrho\sqrt{H_0}\big),
$$
where the series converges in $L^1(\Rm^d)$ and $\{\lambda_p\}_{p \in \NN}$ and $\{\psi_p\}_{p \in \NN}$ are the eigenvalues and eigenvectors of $\varrho$.
\paragraph{Setting of the problem.} The local density constraint is defined in a different, more convenient form than the one given in the introduction as follows: let $\varrho$ be a density operator; for any function $\varphi \in L^\infty(\Rm^d)$, and identifying a function with its associated multiplication operator, the  density $n[\varrho]$ is uniquely defined by duality by 
$$
\int_{\Rm^d} n[\varrho] \varphi dx = \Tr \big( \varrho \varphi \big).
$$
A familiar equivalent expression is
\be \label{defcurrent}
n[\varrho ]= \sum_{p \in \Nm} \lambda_p |\psi_p|^2, 
\ee
where the series converges in $L^1(\Rm^d)$. Given a nonnegative function $n$ satisfying \fref{hypn}, the admissible set is then
\begin{align*}
\calA &= \left\{ \varrho\in \calE_+:\; n[\varrho] = n \right\}.
\end{align*}
The kinetic energy and the entropy of $\varrho \in \calE_+$ are denoted by
\begin{equation*}
E(\varrho) = \Tr \big(\sqrt{H_0} \varrho \sqrt{H_0}\big), \qquad S(\varrho)=\Tr \left(\beta(\varrho)\right),
\end{equation*}
where $\beta$ is the Boltzmann entropy $\beta(x)=x \log(x)-x$. We will state and prove our main result for such a $\beta$, and explain why it directly extends to the Fermi-Dirac entropy for instance. Setting $T=1$ to simplify notation, we write $F(\varrho)=E(\varrho)+S(\varrho)$ and consider the minimization problem
\be \label{miniprob}
\min_{\varrho \in \calA} F(\varrho).
\ee
It is proven in \cite{MP-KRM} that there exists a unique solution $\vs$ to \fref{miniprob}, that we characterize in our main result further.
 Before stating it, we need to introduce a few more notations. Consider the nonnegative potential
$$
V_\star=\frac{|\nabla \sqrt{n}|^2- n[\vs \log (\vs)] }{n},
$$
where a series expression of $n[\vs \log (\vs)]$ is given in \fref{local}. Note that $-n[\vs \log (\vs)]$ is nonnegative since the eigenvalues of $\vs$ are less than one since $\Tr(\vs)=\|n\|_{L^1}=1$. Since $\sqrt{n} \in H^1(\Rm^d)$, and we will see later that $n[\vs \log (\vs)] \in L^1(\Rm^d)$, the potential $V_\star$ is only  in $L^1(\Rm^d;ndx)$. We then define the following weighted Sobolev space
$$
H^1_{\star}(\Rm^d)=\{ u \in L^2(\Rm^d;(1+V_\star) dx): \; \nabla u \in (L^2(\Rm^d))^d\},
$$
which is complete as a closed subspace of $H^1(\Rm^d)$. Furthermore, let $\calQ_\star$ be the  quadratic form
$$
\calQ_\star(u,v)=\int_{\Rm^d} n \nabla \left(\frac{u^*}{\sqrt{n}}\right)\cdot \nabla \left(\frac{v}{\sqrt{n}}\right)dx+ \int_{\Rm^d} \left(V_\star- \frac{k[\vs] }{n} \right)u^* v dx, \qquad u,v \in H^1_\star(\Rm^d),
$$
where $k[\vs]$ is the local kinetic energy defined in \fref{local}. It is not clear at this point that $\calQ_\star$ is indeed well defined on $H^1_\star(\Rm^d)$. For this, we will see on the one hand that, and this is a consequence of the fact that $\vs$ is the minimizer of $F$,
$$
 \int_{\Rm^d} \frac{k[\vs] }{n}|u|^2dx \leq \int_{\Rm^d} n \left|\nabla \left(\frac{u}{\sqrt{n}} \right)\right|^2dx+ \int_{\Rm^d} V_\star |u|^2 dx,
$$
and on the other, after a short calculation, that
\bee
\int_{\Rm^d} n \left|\nabla \left(\frac{u}{\sqrt{n}} \right)\right|^2dx&=&\int_{\Rm^d} \left|\nabla u \right|^2dx- \int_{\Rm^d} \frac{\nabla \sqrt{n} \cdot \nabla |u|^2}{\sqrt{n}}dx+\int_{\Rm^d} \frac{|\nabla \sqrt{n}|^2}{n}|u|^2dx,
\eee
which explains why $\mathcal{Q}_\star$ is well defined on $H_\star^1(\Rm^d)$ using the Cauchy-Schwarz inequality. We will write $\calQ_\star(u)$ for $\calQ_\star(u,u)$. Let finally $\frakH$ be defined by
$$ \frakH=\left\{ \varphi \in L^2(\Rm^d): - \sum_{p \in \Nm} \log (\rho_p)|(\phi_p,\varphi)|^2 < \infty \right\},$$
where we recall  that $\{\rho_p\}_{p \in \NN}$ is the nonnegative and nonincreasing sequence of eigenvalues of $\vs$ and $\{\phi_p\}_{p \in \NN}$ its eigenvectors, which form an orthonormal basis of $L^2(\Rm^d)$. The space $\frakH$ is a Hilbert space when equipped with the inner product
$$
(u,v)_\frakH=-\sum_{p \in \Nm} \log (\rho_p)(\phi_p,u)^* (\phi_p,v).
$$
Note that $\frakH$ is well defined since we will see that $\rho_p>0$ for all $p \in \Nm$, and that $\frakH$ is the domain of self-adjointness of $\sqrt{- \log(\vs)}$.\\

We state now our main result.

\begin{theorem} \label{mainth}Let $\vs$ be the unique solution to the minimization problem \fref{miniprob} with the constraint $n$ satisfying \fref{hypn}, and denote by $\{\rho_p\}_{p\in\NN}$ and $\{\phi_p\}_{p\in\NN}$ the eigenvalues and eigenfunctions of $\vs$. Then $\vs$ is full rank, i.e. $\rho_p>0$ for all $p \in \Nm$, and
\be \label{minieig}
- \log (\rho_p) = \min_{\varphi \in \calK_p} \calQ_\star(\varphi)=\calQ_\star(\phi_p), \qquad p \in \Nm,
\ee
where
$$
\calK_p=\{ \varphi \in H^1_\star(\Rm^d): \|\varphi\|=1, \quad (\phi_q,\varphi)=0,\; q=0,\cdots,p-1\},
$$
with the convention that $\calK_0=\{ \varphi \in H^1_\star(\Rm^d): \|\varphi\|=1\}$. Moreover, denoting by $\calQ_{\star,S}$ the restriction of $\calQ_\star$ to $S=$span$\{\phi_p, p\in \Nm\}$, we have that $\calQ_{\star,S}$ is densely defined and closable, and that $- \log (\varrho_\star)$ is the unique self-adjoint operator associated with the closure $\overline{\calQ}_{\star,S}$. Finally, $H^1_\star(\Rm^d) \subset \frakH$.
\end{theorem}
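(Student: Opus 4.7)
The overarching strategy is to derive the Euler--Lagrange equations for $\vs$ under the local constraint $n[\varrho]=n$, to identify the quadratic form $\calQ_{\star,S}$ with the natural form of the self-adjoint operator $-\log(\vs)$ on its algebraic span of eigenvectors, and finally to pass to the closure and to the Courant--Fischer characterization.

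The first step is to obtain the Euler--Lagrange equations by carefully parametrizing admissible perturbations. Since the constraint is pointwise in $x$, two elementary families preserve it to first order: shifts of the eigenvalues $\rho_p\mapsto \rho_p+\eps a_p$ with the density-preserving constraint $\sum_p a_p|\phi_p(x)|^2\equiv 0$, and eigenvector rotations generated by unitary conjugation by a self-adjoint $A$ whose commutator with $\vs$ has vanishing diagonal kernel. Duality against the first family produces a Lagrange multiplier $A_\star(x)$ such that
$$
-\log\rho_p=\int_{\Rm^d}|\nabla\phi_p|^2\,dx-\int_{\Rm^d} A_\star|\phi_p|^2\,dx,
$$
while the second family gives, in a weak sense, $(-\Delta+A_\star)\phi_p=(-\log\rho_p)\phi_p$. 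Multiplying the eigenvalue equation by $\rho_p\overline{\phi_p}$, summing in $p$, and reorganising using the definitions of $k[\vs]$ and $n[\vs\log(\vs)]$ recovers the explicit formula
$$
A_\star=\frac{\Delta\sqrt n}{\sqrt n}-\frac{n[\vs\log(\vs)]+k[\vs]}{n},
$$
up to an additive constant absorbed into the normalization $\Tr(\vs)=1$. Full rank of $\vs$ is then obtained by a standard convexity argument exploiting $\beta'(0^+)=-\infty$: if $\rho_{p_0}=0$, a small density-preserving transfer of $\eps$ mass into $\phi_{p_0}$ (constructed by combining a rank-one addition with a compensating eigenvalue shift from the first family) produces a free energy variation of order $\eps\log\eps$, negative for small $\eps>0$, contradicting the minimality of $\vs$.

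With $\vs$ full rank, $-\log(\vs)$ is a well-defined self-adjoint operator on $\frakH$ by functional calculus, and its natural quadratic form acts on $S=\mathrm{span}\{\phi_p\}$ diagonally in the basis $\{\phi_p\}$. I would then verify $\calQ_\star(\phi_p,\phi_q)=-\log\rho_p\,\delta_{pq}$ by substituting $u=\phi_p$ into $\calQ_\star$, using the identity
$$
\int_{\Rm^d} n\Big|\nabla(u/\sqrt n)\Big|^2 dx=\int_{\Rm^d}|\nabla u|^2 dx-\int_{\Rm^d}\frac{\nabla\sqrt n\cdot\nabla|u|^2}{\sqrt n}dx+\int_{\Rm^d}\frac{|\nabla\sqrt n|^2}{n}|u|^2 dx
$$
recalled in the introduction, together with the Euler--Lagrange identity of Step 1. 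Thus $\calQ_{\star,S}$ is symmetric, nonnegative, densely defined, and coincides on $S$ with the form of $-\log(\vs)$; since $S$ is an operator core, it is a fortiori a form core, so $\calQ_{\star,S}$ is closable and its closure is the unique form of $-\log(\vs)$. The min--max formula \fref{minieig} then reduces to the Courant--Fischer characterization of the eigenvalues of $-\log(\vs)$, provided no element of $\calK_p$ achieves a smaller value than the minimum over $\calK_p\cap S$, i.e.\ provided $H^1_\star(\Rm^d)\subset\frakH$.

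The main obstacle I expect is precisely this last inclusion. The potentials appearing in $\calQ_\star$ are of very low regularity --- $V_\star$ only lies in $L^1(\Rm^d; n\,dx)$ and $k[\vs]/n$ may blow up on the zero set of $n$ --- so standard form-core arguments such as KLMN or Kato--Rellich are unavailable. The plan is to exploit the special variational structure of $A_\star$ rather than a generic perturbative bound: testing $\calQ_\star$ against spectral truncations $u_N=\sum_{p<N}(\phi_p,u)\phi_p$ of $u\in H^1_\star$ and using the cancellations coming from the identity above together with the eigenvalue equation from Step 1, one should derive the bound $\sum_{p<N}(-\log\rho_p)|(\phi_p,u)|^2\le \calQ_\star(u)+o_N(1)$; Fatou's lemma then yields $u\in\frakH$ and closes the argument.
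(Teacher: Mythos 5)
Your plan gets the high-level structure right --- Euler--Lagrange relations at the minimizer, $\calQ_{\star,S}$ coincides with the form of $-\log(\vs)$ on the eigenvector span, closability because $S$ is a core, Courant--Fischer for the min--max --- but two of the steps have real gaps. The first concerns the parametrization of admissible directions. Your families (eigenvalue shifts subject to $\sum_p a_p|\phi_p(x)|^2\equiv 0$, and unitary rotations whose commutator with $\vs$ has vanishing diagonal kernel) impose infinitely many pointwise linear relations, and it is not clear that they contain nontrivial elements, let alone enough to span the first-order admissible variations; the Lagrange-multiplier argument that is supposed to produce $A_\star$ is therefore formal. The paper avoids this entirely by perturbing additively with a rank-one or rank-two bump $tP_\varphi$ and then conjugating by $a(t)=\sqrt{n/n(t)}$ to restore $n[\varrho(t)]=n$ exactly for every $t$; this explicit parametrization of the feasible set is what makes the differentiation rigorous. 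The same mechanism is what makes the full-rank argument go through: the ``compensating eigenvalue shift'' you invoke is played by the multiplicative factor $a(t)$, not by an eigenvalue-shift direction whose existence you have not shown. You also omit the regularization of the entropy and of the minimizer: since $\beta'(0^+)=-\infty$, $S(\varrho(t))$ is not differentiable as you need, so one must replace $\beta$ by $\beta_\eta$, minimize $F_\eta$ to get $\varrho_\eta$, derive the Euler--Lagrange equations for $\varrho_\eta$, and pass to the limit $\eta\to 0$ (Lemma~\ref{convreg}); this double regularization is what yields equalities rather than one-sided inequalities.

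The second and more serious gap is the inclusion $H^1_\star(\Rm^d)\subset\frakH$, which you correctly flag as the crux but attack with an argument that is circular. Truncating $u\in H^1_\star$ to $u_N=\sum_{p<N}(\phi_p,u)\phi_p$ and granting a weak eigenvalue equation $\calQ_\star(\phi_p,w)=-\log(\rho_p)(\phi_p,w)$ against all $w\in H^1_\star$ (which you have not established; Proposition~\ref{eigeq} only gives it for $w=\phi_q$), one has $\calQ_\star(u)=\calQ_\star(u_N)+\calQ_\star(u-u_N)$, so the bound $\calQ_\star(u_N)\le\calQ_\star(u)+o_N(1)$ reduces to $\calQ_\star(u-u_N)\gtrsim 0$. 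But $\calQ_\star$ contains the sign-indefinite term $-\int_{\Rm^d} k[\vs]|u|^2/n\,dx$, and non-negativity of $\calQ_\star$ on $H^1_\star$ is itself a \emph{consequence} of the estimate $\|\varphi\|_\frakH^2\le\calQ_\star(\varphi)$ you are trying to prove. The paper's route is variational, not spectral: for $t\ge 0$ the curve $a_\eta(t)(\varrho_\eta+tP_{\varphi_\eps})a_\eta(t)$ is admissible, so the one-sided derivative of $F_\eta$ at $t=0^+$ is nonnegative, reading $\calQ_{\eta}(\varphi_\eps)+\Tr\big(\log(\eta+\varrho_\eta)P_{\varphi_\eps}\big)\ge 0$; passing to the limit in $\eta$ and $\eps$, with Fatou applied to the diverging part of the spectral sum, gives $\|\varphi\|_\frakH^2\le\calQ_\star(\varphi)$ on $H^1_\omega=H^1_\star$ (Proposition~\ref{estimH}). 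This one-sided Euler--Lagrange inequality, available precisely because nonnegative rank-one perturbations only require $t\ge 0$, is the key missing ingredient and cannot be replaced by a truncation-and-cancellation argument.
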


Let us make a few remarks. The self-consistent eigenvalue problem  \fref{minieig} is the rigorous formulation of \fref{quantM}. Also,  while the form $\calQ_0$ obtained by setting $k[\vs]=0$ in $\calQ_\star$ can be shown to be closed in $H^1_\star(\Rm^d)$, and is therefore associated to a self-adjoint operator, we do not know if $\calQ_\star$ is closed in $H^1_\star(\Rm^d)$. This is because $-k[\vs]$ is negative and is only in $L^1(\Rm^d)$, and there does not seem to be a way to consider the term involving $k[\vs]$ as a perturbation of $\calQ_0$ with such a low regularity. We obtain though that $\calQ_\star$ is positive, and that it is closable when defined on a dense, smaller set than $H^1_\star(\Rm^d)$. Note that since $\{\phi_p\}_{p\in\NN} \subset H^1_\star(\Rm^d)$, and that $\{\phi_p\}_{p\in\NN}$ is an orthonormal basis of $L^2(\Rm^d)$, the set $H^1_\star(\Rm^d)$ is dense in $L^2(\Rm^d)$.

Theorem \ref{mainth} can be directly generalized to the Fermi-Dirac entropy $x \log(x)+(1-x) \log(1-x)$, $x \in [0,1]$. The Boltzmann and Fermi-Dirac entropies share indeed the same technical difficulties, in particular the fact that the eigenvalues of $\vs$ accumulate at zero. The Fermi-Dirac entropy has another singularity at $x=1$, which is not an issue since there is only a finite number of eigenvalues arbitrarily close to one.
\paragraph{Strategy of proof.} One of the main difficulties is to construct admissible directions in order to derive the Euler-Lagrange equations. For $\varphi \in L^2(\Rm^d)$, we will choose operators of the form (with the Dirac bra-ket notation)$$
\varrho(t)=\sqrt{\frac{n}{n(t)}} \big(\vs+ t  \ket{\varphi}\bra{\varphi}\big)\sqrt{\frac{n}{n(t)}},$$
where $n(t)=n[\vs+ t  \ket{\varphi}\bra{\varphi}]$. An issue here is to make sure that $\vs+ t  \ket{\varphi}\bra{\varphi}$ is nonnegative. This is true for any $\varphi \in L^2(\Rm^d)$ when $t \geq 0$, but is false if $t<0$, leading only to an inequality when $t \geq 0$ in the Euler-Lagrange equations and not to an equality. We will use this inequality to prove an important estimate in the derivation of \fref{minieig} and to obtain that $\vs$ is full rank. We will then replace $\ket{\varphi}\bra{\varphi}$ by $\ket{\phi_p}\bra{\phi_q}+\ket{\phi_q}\bra{\phi_p}$, which will allow us to work with negative and sufficiently small $t$ to obtain the Euler-Lagrange equations as an equality. Note that it is tempting to use operators of the form $C(t)\vs C^*(t)$ for appropriate $C(t)$ since positivity is ensured, but this does not eventually bring more information.

Another important fact is to realize that $A[\vs]$ can be written as
$$
A[\vs]= \frac{\Delta \sqrt{n}}{\sqrt{n}}+\frac{|\nabla \sqrt{n}|^2-n[\vs \log (\vs)]-k[\vs]}{n},
$$
where the first term is called the Bohm potential, and can be absorbed into the Laplacian leading to 
$$
H[\vs]=-\frac{1}{\sqrt{n}} \nabla \cdot \left(n  \nabla \left( \frac{\cdot}{\sqrt{n}}\right) \right)+V_\star -\frac{k[\vs]}{n}.
$$
It then not necessary to have some regularity on $\Delta \sqrt{n}$, and the hypotheses \fref{hypn} are sufficient.
\section{Proof of Theorem \ref{mainth}} \label{proofmainth}

The proof is divided into four parts. In the first one, we obtain important results about the differentiability (in appropriate directions) of the functional $F(\varrho)$. In the second part, we prove that the minimizer is full rank. In the third part, we derive the crucial relation $\calQ_\star(\phi_p,\phi_q)=- \log(\rho_p) \delta_{pq}$, while we conclude the proof in the fourth part.

Throughout this section, we will use the following notations. For $\varphi \in L^2(\Rm^d)$, $P_\varphi$ denotes the rank-one projector $P_\varphi= \ket{\varphi}\bra{\varphi}$. For $t\geq 0$, we consider perturbations of the minimizer of the form $\vs+t P_\varphi$, and introduce the local density
$$
n(t)=n[\vs+t P_\varphi], \qquad a(t)=\sqrt{\frac{n}{n(t)}},
$$
as well as
$$
\varrho(t)= a(t) \left( \vs+t P_\varphi \right) a(t).
$$
The operator $\varrho(t)$ is designed to belong to the admissible set $\calA$. Consider finally the weight 
$$
\omega(x)=1+V_\star(x)+\frac{k[\vs]}{n},
$$
and introduce the space
$$
H^1_\omega(\Rm^d)=\{ u \in L^2(\Rm^d;\omega dx): \; \nabla u \in (L^2(\Rm^d))^d\}.
$$
Note that we actually have $H^1_\omega(\Rm^d)=H^1_\star(\Rm^d)$, but this fact is unknown at this stage. We will need the following logarithmic Sobolev for systems proved in \cite[Corollary 18]{Dolbeault-Loss}, which holds for any $\varrho \in \calE_+$ such that $n[\varrho] \log n[\varrho] \in L^1(\Rm^d)$:
\be
\label{logsobo}
\int_{\Rm^d}n[\varrho]\log (n[\varrho]) dx\leq \sum_{p\in \Nm}\rho_p\log (\rho_p)+\frac{d}{2}\log\left(\frac{e}{2\pi d}\frac{E(\varrho)}{\Tr(\varrho)}\right)\Tr(\varrho).
\ee
Above, $\{\rho_p\}_{p \in \NN}$ denotes the set of eigenvalues of $\varrho$ and $E(\varrho)$ its kinetic energy. Since $E(\varrho)<\infty$ for any admissible $\varrho$ and $x \log (x) \leq 0$ for $x \in [0,1]$, this inequality shows that the entropy of an admissible density operator is indeed well-defined as
\be \label{ent0}
0 \leq -\sum_{p\in \Nm}\rho_p\log (\rho_p)= -\Tr(\beta(\varrho)) \leq \frac{d}{2}\log\left(\frac{e}{2\pi d}E(\varrho)\right)- \int_{\Rm^d}n\log (n) dx<\infty.
\ee
Above, we used that $\Tr(\varrho)=1=\|n\|_{L^1}$ and $n \log (n) \in L^1(\Rm^d)$.
\subsection{Preliminary results}

For $u,v \in H^1_\omega(\Rm^d)$, consider the quadratic form
 \bea \nonumber 
 \calQ_e(u,v)&=& \int_{\Rm^d} \left(-\frac{\nabla \sqrt{n} \cdot \nabla (u^* v )}{\sqrt{n}}+\frac{2|\nabla \sqrt{n}|^2 u^* v}{n} \right)dx\\
  &&+ \int_{\Rm^d} \nabla u^* \cdot \nabla v dx-  \int_{\Rm^d} \frac{k[\vs] u^* v}{n}dx,\label{defQ}
  \eea
  with the notation $\calQ_e(\varphi) \equiv \calQ_e(\varphi,\varphi)$. Note that it follows from the Cauchy-Schwarz inequality for the first term on the right above that $\calQ_e$ is indeed well defined on $H^1_\omega(\Rm^d)$. The first lemma below pertains to the kinetic energy $E(\varrho(t))$, and is proven in section \ref{proofderivE}.
  
\begin{lemma} \label{derivE} Suppose $\varphi \in H^1_\omega(\Rm^d)$. Then, for all $t \geq 0$, $\varrho(t)$ belongs to the admissible set $\calA$. Moreover, $E(\varrho(t)) \in C^1(\Rm^+)$ with
  \bea \label{derivEeq}
  \left.\frac { d E(\varrho(t))}{dt}\right|_{t=0^+}&=& \calQ_e(\varphi).
  \eea
\end{lemma}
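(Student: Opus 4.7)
The plan is to verify admissibility directly, then express $E(\varrho(t))$ via the spectral decomposition of $\vs$ in a form whose derivative at $t=0^+$ avoids $\Delta n$ (so that only $\sqrt{n}\in H^1(\Rm^d)$ is required), and finally compute that derivative and check it equals $\calQ_e(\varphi)$. For admissibility, note that $\vs+tP_\varphi$ is self-adjoint, nonnegative, and trace class for all $t\geq 0$, and since $n(t)=n+t|\varphi|^2\geq n$ pointwise, the multiplication operator $a(t)=\sqrt{n/n(t)}$ is well-defined and satisfies $0\leq a(t)\leq 1$; hence $\varrho(t)$ is self-adjoint, nonnegative, and trace class. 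The density constraint follows from $n[afa]=a^2 n[f]$ for any multiplication operator $a$: indeed, $n[\varrho(t)]=(n/n(t))\,n(t)=n$. Finite kinetic energy will follow from the formula derived next.

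Using $\vs=\sum_p \rho_p \ket{\phi_p}\bra{\phi_p}$, write $E(\varrho(t))=\sum_p \rho_p\|\nabla(a(t)\phi_p)\|^2+t\,\|\nabla(a(t)\varphi)\|^2$. Expanding $\nabla(a(t)\phi_p)=\phi_p\nabla a(t)+a(t)\nabla\phi_p$ and summing using $\sum_p\rho_p|\phi_p|^2=n$ and $\sum_p\rho_p\nabla|\phi_p|^2=\nabla n$ (justified by convergence of the density series in $W^{1,1}(\Rm^d)$), one obtains
\[
E(\varrho(t))=\int_{\Rm^d} n|\nabla a(t)|^2\,dx+\tfrac{1}{2}\int_{\Rm^d}\nabla(a(t)^2)\cdot\nabla n\,dx+\int_{\Rm^d}a(t)^2 k[\vs]\,dx+t\|\nabla(a(t)\varphi)\|^2.
\]
Each of these four quantities is individually finite under $\sqrt{n}\in H^1$, $k[\vs]\in L^1$, $\varphi\in H^1_\omega$, and the bound $a(t)\leq 1$.

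To compute $dE/dt$ at $t=0^+$, use $a(t)^2=n/(n+t|\varphi|^2)$, so $\partial_t a(t)^2|_{t=0}=-|\varphi|^2/n$ and $\nabla(a(t)^2)=-t\,n^2\nabla(|\varphi|^2/n)/(n+t|\varphi|^2)^2$. Since $\nabla a(0)=0$, the first term is $O(t^2)$ and does not contribute. The second, divided by $t$, converges by dominated convergence to $-\tfrac{1}{2}\int\nabla(|\varphi|^2/n)\cdot\nabla n\,dx$ (integrable majorant $|\nabla(|\varphi|^2/n)\cdot\nabla n|$, via Cauchy--Schwarz and $|\nabla\sqrt{n}|^2/n\leq V_\star\leq\omega$); crucially, one must \emph{not} integrate by parts, which would produce $\Delta n$. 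Instead, expanding $\nabla(|\varphi|^2/n)=\nabla|\varphi|^2/n-|\varphi|^2\nabla n/n^2$ and using $\nabla n=2\sqrt{n}\,\nabla\sqrt{n}$ together with $|\nabla n|^2/n^2=4|\nabla\sqrt{n}|^2/n$ recovers exactly the first two terms of $\calQ_e(\varphi)$. The third term's derivative is $-\int|\varphi|^2 k[\vs]/n\,dx$ (integrand in $L^1$ since $\omega\geq k[\vs]/n$). The last yields $\|\nabla\varphi\|^2$. Summing reproduces $\calQ_e(\varphi)$.

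The main obstacle will be the rigorous justification of the dominated convergence in the step above, together with the termwise differentiation of the spectral series, under the minimal regularity $\sqrt{n}\in H^1(\Rm^d)$, $k[\vs]\in L^1(\Rm^d)$, and $\varphi\in H^1_\omega(\Rm^d)$; the weight $\omega=1+V_\star+k[\vs]/n$ has been calibrated precisely so that each passage succeeds. For the full $C^1$ property on $\Rm^+$, the same analysis performed at any $t_0>0$ --- using the factorization $a(t_0+s)=a(t_0)\sqrt{n(t_0)/(n(t_0)+s|\varphi|^2)}$ --- yields differentiability at $t_0$, and uniform dominated convergence bounds on compact subintervals of $\Rm^+$ then give continuity of the derivative.
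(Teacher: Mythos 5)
Your proof follows essentially the same route as the paper's: expand $E(\varrho(t))=E(a(t)\vs a(t))+t\,E(a(t)P_\varphi a(t))$ through the spectral decomposition, reduce everything to the moments $n$, $\nabla n$, $k[\vs]$, and differentiate under the integral. Your four-term expression is exactly the paper's intermediate formula (the paper writes the second term as $\int a(t)\nabla a(t)\cdot\nabla n\,dx=\tfrac12\int \nabla(a(t)^2)\cdot\nabla n\,dx$); the difference is that the paper then recombines all four pieces via the identity $(n(t)\nabla a(t)+a(t)\nabla n(t))\cdot\nabla a(t)=|\nabla\sqrt{n}|^2-a(t)^2|\nabla\sqrt{n(t)}|^2$ into the single expression \fref{eqE}, whose integrand is manifestly in $L^1$ and whose $t$-derivative admits a $t$-uniform $L^1$ majorant, while you keep the four terms separate and differentiate each. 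Both versions of the algebra recover $\calQ_e(\varphi)$ at $t=0^+$, and your observation that one must not integrate by parts is exactly the point the paper makes about absorbing the Bohm potential.

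One step in your proposal is not quite right as stated, though the conclusion survives: the claim that the first term $T_1(t)=\int n|\nabla a(t)|^2\,dx$ is $O(t^2)$ because $\nabla a(0)=0$. Using $\sqrt{n}\,\nabla a(t)=-\tfrac{t}{2}\sqrt{n}\,a(t)^3\nabla(|\varphi|^2/n)$ one gets $T_1(t)=\tfrac{t^2}{4}\int n\,a(t)^6|\nabla(|\varphi|^2/n)|^2\,dx$, but $\int n\,|\nabla(|\varphi|^2/n)|^2\,dx$ need \emph{not} be finite under the hypotheses $\varphi\in H^1_\omega$, $\sqrt{n}\in H^1$ alone (for instance in $d=1$ with $\sqrt{n}\sim(1-x)$ and $\varphi\sim(1-x)^{3/4}$ near $x=1$ one has $n|\nabla(|\varphi|^2/n)|^2\sim (1-x)^{-1}$), so the naive Taylor bound gives only $T_1(t)=O(t)$. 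What is true, and suffices, is that $T_1(t)/t\to 0$: writing $T_1(t)/t=\tfrac{t}{4}\int n\,a(t)^6|\nabla(|\varphi|^2/n)|^2\,dx$ and estimating the integrand by $C\big(|\nabla\varphi|^2+|\varphi|^2|\nabla\sqrt{n}|^2/n\big)$ uniformly in $t\geq 0$ (using $n(t)\geq n$ and $n(t)\geq t|\varphi|^2$ separately), dominated convergence gives the limit $0$ since the integrand is explicitly proportional to $t$. So replace the $O(t^2)$ argument by this dominated-convergence argument; that is also, in effect, what the paper's uniform bound on $|\partial_t a_1(t)|$ accomplishes in one stroke for all four terms, which is why the recombined form \fref{eqE} is the more economical bookkeeping.
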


We then consider the entropy $S(\varrho)$ for which we will need the next lemma.
\begin{lemma} \label{C1} Let $\varphi \in H^1(\Rm^d)$ with $|\varphi| \leq M \sqrt{n}$ a.e. for some $M>0$. Then, $\varrho(t) \in C^1(\Rm^+,\calJ_1)$.
\end{lemma}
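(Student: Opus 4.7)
The plan is to view $\varrho(t)$ as the product $a(t) \sigma(t) a(t)$, where $\sigma(t) = \vs + t P_\varphi$ is easily seen to be $C^1$ into $\calJ_1$, and $a(t)$ is a bounded multiplication operator whose $C^1$ regularity as an $L^\infty(\Rm^d)$-valued map constitutes the main point. The conclusion will then follow from the continuity of the trilinear pairing $(f, K, g) \mapsto f K g$ from $L^\infty(\Rm^d) \times \calJ_1 \times L^\infty(\Rm^d)$ into $\calJ_1$, satisfying $\|fKg\|_1 \leq \|f\|_\infty \|g\|_\infty \|K\|_1$, together with the Leibniz rule for $C^1$ maps composed through such a continuous multilinear map.

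First, since $n[P_\varphi] = |\varphi|^2$, the local density associated with $\vs + t P_\varphi$ is simply $n(t) = n + t|\varphi|^2$. The pointwise hypothesis $|\varphi|^2 \leq M^2 n$ gives $n \leq n(t) \leq (1+tM^2)n$, hence
\[
\frac{1}{\sqrt{1+tM^2}} \leq a(t) \leq 1 \quad \text{a.e. on } \{n > 0\},
\]
and on $\{n = 0\}$ the function $\varphi$ vanishes, so that $a(t)$ extends as a bounded function of $L^\infty$ norm at most one. In particular, $a(t)$ is a bounded multiplication operator on $L^2(\Rm^d)$. Differentiating the explicit formula $a(t) = \sqrt{n}/\sqrt{n + t|\varphi|^2}$ in $t$ yields the candidate derivative
\[
a'(t) = -\frac{|\varphi|^2}{2 n(t)}\, a(t),
\]
which is bounded by $M^2/2$ in $L^\infty(\Rm^d)$, once again thanks to $|\varphi|^2/n(t) \leq |\varphi|^2/n \leq M^2$. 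A direct Taylor expansion of the scalar map $t \mapsto \sqrt{n(x)/(n(x) + t|\varphi(x)|^2)}$, with remainder controlled uniformly in $x$ by the same pointwise bound, gives $\| h^{-1}(a(t+h) - a(t)) - a'(t) \|_{L^\infty} \to 0$ as $h \to 0$ uniformly on compact subsets of $\Rm^+$; an analogous estimate (or a Lipschitz bound on $t \mapsto a'(t)$ derived by further differentiation) shows continuity of $a'$ into $L^\infty(\Rm^d)$. Therefore $a \in C^1(\Rm^+, L^\infty(\Rm^d))$.

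Finally, $\sigma(t) = \vs + t P_\varphi$ is manifestly $C^1$ from $\Rm^+$ into $\calJ_1$, with $\sigma'(t) = P_\varphi$, since $P_\varphi$ is a rank-one operator of trace norm $\|\varphi\|^2$. Applying the product rule through the continuous trilinear pairing recalled above, we conclude that $\varrho \in C^1(\Rm^+, \calJ_1)$ with
\[
\varrho'(t) = a'(t)\, \sigma(t)\, a(t) + a(t)\, P_\varphi\, a(t) + a(t)\, \sigma(t)\, a'(t).
\]
The main technical obstacle is the $L^\infty$-differentiability of $a$ established in the second paragraph: the pointwise control $|\varphi| \leq M\sqrt{n}$ is indispensable here, as without it the factor $|\varphi|^2/n(t)$ appearing in $a'(t)$ would fail to be essentially bounded, and $a'(t)$ would not even define a bounded multiplication operator on $L^2(\Rm^d)$, much less give rise to a trace class derivative for $\varrho(t)$.
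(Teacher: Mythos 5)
Your proof is correct and takes essentially the same approach as the paper: you factor $\varrho(t) = a(t)\,\sigma(t)\,a(t)$, show $a(t)$ and $\partial_t a(t)$ are bounded (by $1$ and $M^2/2$ respectively, using $|\varphi|^2 \le M^2 n$), note $\sigma(t)=\vs+tP_\varphi$ is $C^1$ in $\calJ_1$, and invoke Leibniz through the $L^\infty\times\calJ_1\times L^\infty\to\calJ_1$ pairing. The paper states the same conclusion "by inspection," while you spell out the $C^1(\Rm^+,L^\infty)$ regularity of $a$ via a uniform Taylor remainder, which is a useful elaboration but not a different route.
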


Because of the singularity of $\beta'(x)$ at $x=0$ and of the particular form of $\varrho(t)$, it is difficult to justify some calculations that directly involve $S(\varrho(t))$. We therefore need to regularize and introduce, for $x \in [0,1]$ and $\eta>0$,
$$
\beta_\eta(x)= (x+\eta) \log(x+\eta)-x-\eta \log (\eta),
$$
and define
$$
S_\eta(\varrho)=\Tr(\beta_\eta(\varrho)), \qquad F_\eta(\varrho)=E(\varrho)+S_\eta(\varrho).
$$
Note that
\be \label{regbet}
\beta_\eta(x)-\beta(x)=\int_0^\eta (\log(u+x)-\log(u)) du \geq 0,
\ee
since $\log$ is an increasing function. We then obtain the following lemma.

  \begin{lemma} \label{lemS} Let $\varphi \in L^2(\Rm^d)$ with $ |\varphi| \leq M \sqrt{n}$ a.e. for some constant $M>0$. Then $S_\eta(\varrho(t)) \in C^1(\Rm^+)$, and for all $t \geq 0$,
    \be \label{posi}
    h(\eta)+S_\eta(\varrho(t))-S_\eta(\varrho(0)) \geq S(\varrho(t))-S(\varrho(0)),
    \ee
    where $h(\eta)$ is a nonnegative function independent of $t$ and $\varphi$ that goes to zero as $\eta \to 0$. Moreover,
    \be \label{entrel}
    \left.    \frac{d S_\eta(\varrho(t))}{d t} \right|_{t=s}=\Tr\big( \log(\eta+\varrho(s)) P_{\varphi a(s)}\big)-   \Tr\left( \vs \log(\eta+\varrho(s)) \frac{|\varphi|^2}{n(s)}\right).
    \ee
  \end{lemma}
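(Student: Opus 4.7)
The regularization $\beta_\eta$ is introduced so that $\beta_\eta'(x)=\log(x+\eta)$ stays bounded on the spectrum of every nonnegative bounded operator, bypassing the singularity of $\beta'=\log$ at $0$ that would otherwise obstruct any direct chain-rule computation. I would handle the three claims independently: $C^1$-smoothness of $t\mapsto S_\eta(\varrho(t))$, the explicit derivative formula \eqref{entrel}, and the comparison estimate \eqref{posi} that will let us pass to the true entropy after letting $\eta\to 0$.

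For the smoothness and the derivative I would first establish that $\varrho(t)\in C^1(\Rm^+,\calJ_1)$ under the weaker assumption $|\varphi|\leq M\sqrt n$ (weaker than the $H^1$ hypothesis of Lemma \ref{C1}). The pointwise estimate $|\varphi|^2\leq M^2 n\leq M^2 n(t)$ forces $\|a(t)\|_{L^\infty}\leq 1$ and, via $a'(t)=-\tfrac12 a(t)|\varphi|^2/n(t)$, also $\|a'(t)\|_{L^\infty}\leq M^2/2$ uniformly in $t\geq 0$. Writing $\varrho(t)=a(t)\sigma(t)a(t)$ with $\sigma(t)=\vs+tP_\varphi$, splitting the difference quotient into three pieces and using that $(a(t+h)-a(t))/h\to a'(t)$ in $L^\infty$, $(\sigma(t+h)-\sigma(t))/h=P_\varphi$ in $\calJ_1$, and $a(t+h)\to a(t)$ in $L^\infty$ yields $\varrho'(t)=a'\sigma a+aP_\varphi a+a\sigma a'$ in $\calJ_1$. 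Since $\beta_\eta$ is analytic on a neighborhood of the spectrum of $\varrho(t)$, which sits in $[0,1]$ (as $\Tr(\varrho(t))=\|n\|_{L^1}=1$), a Dunford--Taylor contour integral $\beta_\eta(\varrho(t))=(2\pi i)^{-1}\oint_\gamma\beta_\eta(z)(z-\varrho(t))^{-1}dz$ is $C^1$ in $\calJ_1$, differentiable under the integral, and taking the trace yields $(d/dt)\Tr(\beta_\eta(\varrho(t)))=\Tr(\log(\eta+\varrho(t))\varrho'(t))$. Substituting $a'(s)=-\tfrac12 a(s)V(s)$ with $V(s):=|\varphi|^2/n(s)$ and exploiting that multiplication operators commute yields $a'\sigma a=-\tfrac12 V(s)\varrho(s)$, $a\sigma a'=-\tfrac12\varrho(s)V(s)$, and $aP_\varphi a=P_{a(s)\varphi}$; cyclicity of the trace together with $[\log(\eta+\varrho(s)),\varrho(s)]=0$ then reduces the expression to the form stated in \eqref{entrel}.

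For the inequality \eqref{posi} I would take $h(\eta):=S_\eta(\vs)-S(\vs)\geq 0$, which is manifestly independent of $t$ and $\varphi$, so that the rearrangement
$$
h(\eta)+S_\eta(\varrho(t))-S_\eta(\varrho(0))=S_\eta(\varrho(t))-S(\varrho(0))\geq S(\varrho(t))-S(\varrho(0))
$$
follows from $\beta_\eta\geq\beta$ (see \eqref{regbet}). The real content is then to show $h(\eta)\to 0$. Each summand $\beta_\eta(\rho_p)-\beta(\rho_p)$ is nondecreasing in $\eta$ since $\partial_\eta(\beta_\eta-\beta)(x)=\log(1+x/\eta)\geq 0$, and converges to $0$ pointwise as $\eta\downarrow 0$. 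For summability at the reference value $\eta=1$, the bound $(\rho_p+1)\log(\rho_p+1)\leq 2\rho_p$ on $[0,1]$ gives $\beta_1(\rho_p)-\beta(\rho_p)\leq 2\rho_p+\rho_p|\log\rho_p|$, whose sum is controlled by $2\Tr(\vs)=2$ together with the finite entropy of $\vs$ provided by \eqref{ent0}. Monotone convergence then delivers $h(\eta)\downarrow 0$.

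The main obstacle I anticipate is this decay $h(\eta)\to 0$: it couples summability of the small eigenvalues of $\vs$ with the logarithmic singularity of $\beta$, and is the only step that relies on a non-elementary input, namely the entropy bound from the logarithmic Sobolev inequality \eqref{logsobo}. Everything else---the $C^1$-regularity of $\varrho(t)$ in trace norm under the mere $L^\infty$-dominance $|\varphi|\leq M\sqrt n$, and the algebraic reduction leading to \eqref{entrel}---is essentially bookkeeping once the commutation between multiplication operators and the Dunford--Taylor calculus for $\beta_\eta$ is in place.
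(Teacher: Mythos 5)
Your proof is correct and follows the same overall structure as the paper's, but it is more self-contained at the two non-elementary junctures. For the differentiability of $t\mapsto S_\eta(\varrho(t))$ the paper composes Lemma~\ref{C1} with a cited result (\cite[Lemma 5.3]{MP-JSP}) giving $\calJ_1$-Fr\'echet differentiability of $\varrho\mapsto\Tr\beta_\eta(\varrho)$; you instead run a Dunford--Taylor contour argument from scratch, arriving at the same chain rule $\frac{d}{dt}\Tr\beta_\eta(\varrho(t))=\Tr\big(\log(\eta+\varrho(t))\,\varrho'(t)\big)$. You are also right that the $C^1(\Rm^+,\calJ_1)$ regularity of $\varrho(t)$ only needs $\varphi\in L^2$ with $|\varphi|\le M\sqrt n$, which matches the hypothesis of Lemma~\ref{lemS} exactly; the paper's proof cites Lemma~\ref{C1}, whose statement nominally asks $\varphi\in H^1$, though its proof never uses the $H^1$ part. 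For $h(\eta)\to 0$ you use monotone convergence ($\beta_\eta-\beta$ nondecreasing in $\eta$ since $\partial_\eta(\beta_\eta-\beta)=\log(1+x/\eta)\ge 0$, with summability at $\eta=1$ from $\Tr(\vs)=1$ and finite entropy via \eqref{ent0}); the paper instead uses Fatou's lemma for series together with the domination $\beta_\eta\ge\beta$ and the sign constraint $\beta_\eta\le 0$ on $[0,1]$ for $\eta<1/2$. Both rest on the same nontrivial input (finiteness of $-\Tr(\vs\log\vs)$), and yours is arguably a touch cleaner since it avoids the sign bookkeeping. One small remark: your algebra, exactly like the paper's own computation, produces the second trace in \eqref{entrel} with $\varrho(s)$ rather than $\vs$; the $\vs$ in the displayed statement appears to be a typo, as the downstream use in Step~2 of the proof of Proposition~\ref{fullrank} (with $n[\varrho_\eps(s)\log(\eta+\varrho_\eps(s))]$) confirms.
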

   \begin{proof} That $S_\eta(\varrho(t)) \in C^1(\Rm^+)$ is a direct consequence of Lemma \ref{C1} and that, according to \cite[Lemma 5.3]{MP-JSP}, $S_\eta(\varrho)$ is differentiable at any nonnegative density operator $\varrho$ in any direction $\delta \rho \in \calJ_1$, with $D S_\eta[\varrho](\delta \rho)=\Tr(\log(\eta+\varrho) \delta \rho)$. Regarding \fref{posi}, we have first, thanks to \fref{regbet}, $S_\eta(\varrho(t))\geq S(\varrho(t))$. We  now show that $S_\eta(\vs) \to S(\vs)$ as $\eta \to 0$. Set $\eta \in (0,1/2)$, then $\beta_\eta(x) \leq 0$ for $x \in [0,1]$. % and $\beta_\eta$ is decreasing on $[0, 1-\eta]$. Consider $p_0$ such that $\rho_p < 1-\eta$ for $p \geq p_0$.
    %Then, since $\beta_\eta$ is continuous and $rho_p>0$ according to Proposition \ref{fullrank},
 Then, by Fatou's lemma for series and \fref{regbet},
 \begin{align*}
   - \sum_{p \in \Nm} \beta(\rho_p)  =\sum_{p \in \Nm} \liminf_{\eta \to 0} - \beta_\eta(\rho_p) &\\
   \leq \liminf_{\eta \to 0}-\sum_{p \in \NN} &\beta_\eta(\rho_p) \leq \limsup_{\eta \to 0}-\sum_{p \in \NN} \beta_\eta(\rho_p) \leq - \sum_{p \in \Nm} \beta(\rho_p),
 \end{align*}
 which yields the desired result. Then, with $h(\eta)=S_\eta(\varrho(0))-S(\varrho(0))$ and $S_\eta(\varrho(t))\geq S(\varrho(t))$, we have
 $$
  h(\eta)+S_\eta(\varrho(t))-S_\eta(\varrho(0))=S_\eta(\varrho(0))-S(\varrho(0))+S_\eta(\varrho(t))-S_\eta(\varrho(0)) \geq S(\varrho(t))-S(\varrho(0)),
 $$
which proves \fref{posi}. Finally,
\bee
\left.    \frac{d S_\eta(\varrho(t))}{d t} \right|_{t=s} &=&\Tr \Big( \log (\eta+\varrho(s)) \partial_t \varrho(s)\Big)\\
&=&\Tr \Big( \log (\eta+\varrho(s))P_{\varphi a(s)} \Big)+   \Tr \Big( \log (\eta+\varrho(s)) \varrho(s) b(s)\Big), 
\eee
where $b(s)=- |\varphi|^2/n(s)$. Above, we used the cyclicity of the trace and $|b(s)| \leq |\varphi|^2/n \in L^\infty(\Rm^d)$, which  proves \fref{entrel}.
\end{proof}

\begin{remark}
  Note that both terms in \fref{entrel} are finite since $\varrho(s)$ and $P_{\varphi a(s)}$ are trace class, $\log(\eta+\varrho(s))$ is bounded, and $|\varphi|^2/n(s) \leq |\varphi|^2/n \in L^\infty(\Rm^d)$ by assumption. 
\end{remark}

\subsection{The minimizer is full rank}

We have the following proposition.  
\begin{proposition} \label{fullrank} The minimizer $\vs$ is full rank, that is $\rho_p >0$ for all $p \in \Nm$.
  \end{proposition}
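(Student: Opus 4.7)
The plan is to argue by contradiction: suppose $\rho_N=0$ for some $N\in\Nm$; since $\{\rho_p\}$ is nonincreasing and nonnegative, this forces $\vs$ to have finite rank $N$ with an infinite-dimensional kernel $K$. The first step is to produce a unit vector $\varphi\in K$ satisfying the hypotheses of Lemmas~\ref{derivE}--\ref{lemS}, namely $\varphi\in H^1_\omega(\Rm^d)$ and $|\varphi|\leq M\sqrt n$ a.e.\ for some $M>0$. I would take $\varphi=\sqrt n\,\chi$ with $\chi\in C_c^\infty(\Rm^d)$ supported in a set where $n\geq c>0$; the pointwise bound $|\varphi|\leq\|\chi\|_\infty\sqrt n$ is then automatic, and $\varphi\in H^1_\omega$ follows from $\sqrt n\in H^1(\Rm^d)$, $\chi\in W^{1,\infty}$, and the $L^1(\Rm^d)$ bounds on $|\nabla\sqrt n|^2$, $n[\vs\log\vs]$, and $k[\vs]$. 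The $N$ orthogonality constraints $(\phi_p,\varphi)=0$ for $p<N$ can be met nontrivially by choosing $\chi$ in an $(N+1)$-dimensional subspace of such test functions.

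Given this $\varphi$, Lemma~\ref{derivE} places $\varrho(t)=a(t)(\vs+tP_\varphi)a(t)$ in $\calA$ for every $t\geq 0$, so minimality gives $F(\varrho(t))\geq F(\vs)$. Since $\beta_\eta\geq\beta$, setting $F_\eta(\varrho)=E(\varrho)+S_\eta(\varrho)$ yields
\[
F_\eta(\varrho(t))-F_\eta(\vs)\geq F(\varrho(t))-F_\eta(\vs)\geq F(\vs)-F_\eta(\vs)=-h(\eta),\qquad t\geq 0.
\]
From Lemmas~\ref{derivE} and~\ref{lemS}, $t\mapsto F_\eta(\varrho(t))$ is of class $C^1$, and its derivative at $t=0$ can be computed from \eqref{entrel}: because $\varphi\in K$ and $\|\varphi\|=1$, $(\varphi,\log(\eta+\vs)\varphi)=\log\eta$, while $\Tr(\vs\log(\eta+\vs)|\varphi|^2/n)$ stays uniformly bounded as $\eta\to 0^+$ thanks to $|\varphi|^2/n\in L^\infty(\Rm^d)$ and $n[\vs\log\vs]\in L^1(\Rm^d)$. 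Adding the kinetic contribution of Lemma~\ref{derivE} yields
\[
\partial_t F_\eta(\varrho(t))\big|_{t=0}=\calQ_e(\varphi)+\log\eta-C_\eta,\qquad C_\eta=O(1)\ \text{as}\ \eta\to 0^+.
\]

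I would then Taylor-expand $F_\eta(\varrho(t))$ to first order in $t$: using the operator-norm Lipschitz bound $\|\log(\eta+A)-\log(\eta+B)\|\leq\eta^{-1}\|A-B\|$ (from the integral representation of $\log$), the trace-norm estimate $\|\varrho(s)-\vs\|_{\calJ_1}=O(s)$ coming from Lemma~\ref{C1}, and the uniform bound $\|\partial_s\varrho(s)\|_{\calJ_1}=O(1)$, the remainder is expected to be of size $O(t^2/\eta)$, whence
\[
\bigl(\calQ_e(\varphi)+\log\eta-C_\eta\bigr)t+O(t^2/\eta)\geq -h(\eta).
\]
In the finite-rank regime a direct computation gives $h(\eta)\sim N\eta|\log\eta|$, so $h(\eta)/|\log\eta|=O(\eta)$. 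Choosing $\eta$ small enough that $\log\eta+\calQ_e(\varphi)-C_\eta<-1$, and then picking $t$ in the nonempty range $2h(\eta)/|\log\eta|<t\ll\eta|\log\eta|$ (possible once $\eta<e^{-2N}$, say), the linear term falls strictly below $-h(\eta)$ while the quadratic remainder is negligible, contradicting the preceding inequality. This forces $\rho_p>0$ for every $p$. The main obstacle I foresee is the uniform $O(t^2/\eta)$ control of the Taylor remainder: one must show that the derivative $\partial_sF_\eta(\varrho(s))$ from \eqref{entrel} varies slowly in $s$, which hinges on combining the trace-norm regularity of $s\mapsto\varrho(s)$ from Lemma~\ref{C1} with the operator Lipschitz estimate for $\log(\eta+\cdot)$ despite the infinite degeneracy of the zero eigenvalue of $\vs$.
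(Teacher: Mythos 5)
Your approach differs from the paper's in several places, but one of these differences introduces a genuine gap: you assume that a nontrivial kernel forces $\vs$ to have \emph{finite} rank. That does not follow from the negation of ``full rank.'' A trace-class self-adjoint operator on a separable Hilbert space can perfectly well have infinitely many positive eigenvalues accumulating at zero \emph{and} a nontrivial kernel; the nonincreasing ordering convention stated in the paper for $\{\rho_p\}_{p\in\NN}$ is only literally sustainable after full-rankness has been proven. The paper explicitly handles this: in its proof it splits the basis into a kernel part $\{\psi_p\}_{p\in I}$ and a positive part $(\rho_p)_{1\le p\le N}$, and expressly allows $N$ to be infinite. Your argument hinges crucially on the quantitative estimate $h(\eta)\sim N\eta|\log\eta|$, which only makes sense for finite $N$; if instead $\rho_p\sim p^{-\alpha}$ for some $\alpha>1$, one finds $h(\eta)\sim\eta^{1-1/\alpha}|\log\eta|\gg\eta|\log\eta|^2$ as $\eta\to 0$, so your admissible $t$-window $2h(\eta)/|\log\eta|<t\ll\eta|\log\eta|$ is empty and the contradiction never closes. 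The paper sidesteps this by not requiring any rate on $h(\eta)$: it integrates $\partial_sS_\eta(\varrho_\eps(s))$ in $s$, tracks the eigenvalue $\rho_1(s)\to 0$ of $\varrho_\eps(s)$, shows the integrand is bounded above by a strictly negative constant on $[0,h(\eta)]$, and only uses the qualitative fact $h(\eta)\to 0$ to make the interval $[0,h(\eta)]$ lie inside the region where those bounds hold.

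Two smaller remarks. First, your construction $\varphi=\sqrt{n}\,\chi$ with $\chi\in C_c^\infty$ supported in $\{n\ge c\}$ is delicate because $n$ is only known to satisfy $\sqrt{n}\in H^1(\Rm^d)$, so $\{n\ge c\}$ need not contain an open set; the paper avoids this by taking an arbitrary $\phi\in\mathrm{Ker}\,\vs$ and regularizing it multiplicatively via $\varphi_\eps=\phi_\eps\bigl(n/(n+\eps|\phi_\eps|^2)\bigr)^{1/2}$, accepting that the regularized vector is no longer exactly in the kernel and compensating through the perturbed eigenvalue $\rho_1(s)$. Second, your second-order Taylor remainder $O(t^2/\eta)$ is plausible via the operator-Lipschitz bound $\|\log(\eta+A)-\log(\eta+B)\|\le\eta^{-1}\|A-B\|$ combined with the trace-norm regularity from Lemma~\ref{C1}, and your idea of exploiting $(\varphi,\log(\eta+\vs)\varphi)=\log\eta$ exactly is attractive; but this part of the argument is only sketched, and the paper's route (bounding the integrand of $\int_0^t\partial_sS_\eta\,ds$ directly rather than Taylor-expanding around $s=0$) is both more robust and avoids having to control a second derivative at all.
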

  \begin{proof} We prove the result by contradiction, in the spirit of \cite{MP-JSP}, section 5, by differentiating in a direction related to a nonzero eigenfunction in the kernel of $\vs$. Compared to \cite{MP-JSP}, there are complications though since on the one hand, there is the additional term $h(\eta)$ in \fref{posi} that needs to be handled carefully, and on the other admissible directions do not admit as simple expressions as in \cite{MP-JSP}.

\noindent\textbf{Step 1:} Assume first that the kernel of $\vs$ is not $\{0\}$, and consider an orthonormal basis $\{\psi_p\}_{p\in I}$ of $\mbox{Ker}\,\vs$ ($I$ may be empty, finite or infinite, and we write $|I|$ for its cardinal). Then, we denote by $(\rho_p)_{1\leq p\leq N}$ the nonincreasing sequence of nonzero eigenvalues of $\vs$ (here $N$ is finite or not), associated to the orthonormal family of eigenfunctions $(\phi_p)_{1\leq p\leq N}$. We thus obtain a Hilbert basis $\{(\psi_p)_{1 \leq p \leq |I|}, (\phi_p)_{1\leq p\leq  N}\}$ of $L^2(\Rm^d)$. Pick then for instance $\psi_1$, that we denote for simplicity by $\phi$. Having little information about its regularity  (we only know it is in $L^2(\Rm^d)$), we need to regularize it in order to apply Lemmas \ref{derivE} and \ref{lemS}. Let then $\varphi_\eps=\phi_\eps (\frac{n}{n+\eps |\phi_\eps|^2})^{1/2}$, where $\phi_\eps \in C_c^\infty(\Rm^d)$ and $\phi_\eps \to \phi$ in $L^2(\Rm^d)$. We verify that $\phi_\eps \in H^1_\omega(\Rm^d)$. First of all,
\begin{align*}
\int_{\mathbb{R}^d} |\varphi_{\varepsilon}(x)|^2 \omega(x) dx =& \int_{\mathbb{R}^d} |\phi_{\varepsilon}(x)|^2 \frac{n(x)}{n(x) + \varepsilon |\phi_{\varepsilon}(x)|^2} dx
\\ &+ \int_{\mathbb{R}^d} |\phi_{\varepsilon}(x)|^2 \frac{|\nabla \sqrt{n(x)}|^2 - n[\varrho_{\star}\log(\varrho_{\star})] - k[\varrho_{\star}]}{n(x) + \varepsilon |\phi_{\varepsilon}(x)|^2} dx
\\[2mm] \leq & \|\phi_\eps\|^2 + \varepsilon^{-1} \left( \|\nabla\sqrt{n}\|^2 - \Tr(\varrho_{\star}\log(\varrho_{\star})) + \|k[\vs]\|_{L^1} \right),
\end{align*}
which is finite thanks to \fref{ent0} and the fact that $\|k[\vs]\|_{L^1}=E(\varrho_{\star})<\infty$.
Moreover, we have
\begin{align}
\nabla \varphi_{\varepsilon} &= \left(\frac{n}{n+\eps |\phi_\eps|^2}\right)^{1/2}\nabla \phi_{\varepsilon} + \phi_{\varepsilon} \nabla \sqrt{n}\left( \frac{1}{(n+\varepsilon|\phi_{\varepsilon}|^2)^{1/2}} -  \frac{n}{(n+\varepsilon|\phi_{\varepsilon}|^2)^{3/2}}\right)\nonumber
\\ &\hspace{1em}- \varepsilon\frac{\phi_{\varepsilon} \sqrt{n}\Re(\phi_{\varepsilon}^* \nabla \phi_{\varepsilon} )}{(n+\varepsilon|\phi_{\varepsilon}|^2)^{3/2}},\label{eq:derivativephieps}
\end{align}
leading to the estimate
\begin{equation*}
|\nabla \varphi_{\varepsilon}| \leq 2(|\nabla \phi_{\varepsilon}| + \varepsilon^{-1/2} |\nabla \sqrt{n}|).
\end{equation*}
This yields $\varphi_\eps \in H^1_\omega(\Rm^d)$. Note that we also have $ |\varphi_\eps| \leq \eps^{-1} \sqrt{n}$. Consider now
$$
\varrho_\eps(t)= a_\eps(t) \left( \vs+t P_{\varphi_\eps} \right) a_\eps(t),
$$
where
    $$
n_\eps(t)=n[\vs+t P_{\varphi_\eps}], \qquad a_\eps(t)=\sqrt{\frac{n}{n_\eps(t)}}.
$$
According to Lemma \ref{derivE}, $\varrho_\eps$ is admissible and $\varrho_\eps \in C^1(\Rm^+,\calJ_1)$. As a consequence, $\varrho_\eps(t) \to \varrho_\eps(0)=\vs$ in $\calJ_1$ as $t\to 0^+$. Adapting Lemma \ref{convreg} further, item (iv), we can choose the eigenbasis of $\varrho_\eps(t)$ and that of $\vs$ in such a way that the eigenvectors converge to one another in $L^2(\Rm^d)$ as $t \to 0^+$. We then pick $\{(\psi_p)_{1 \leq p \leq |I|}, (\phi_p)_{1\leq p\leq  N}\}$ to be this very basis for $\vs$, and we denote by $\{(\psi_p(t))_{1 \leq p \leq N(t)}\}$ that of $\varrho_\eps(t)$ ($N(t)$ can be finite or not). Let $\{\rho_p(t)\}_{1 \leq p \leq N(t)}$ be the eigenvalues associated with $\{\psi_p(t)\}_{1 \leq p \leq N(t)}$. We suppose that $\psi_1(t)$ is the eigenvector of $\varrho_\eps(t)$ converging to $\phi$, and we denote for simplicity $\psi_1(t)\equiv \phi(t)$. As a consequence,  adapting Lemma \ref{convreg} (iii) yields $\rho_1(t) \to 0$ as $t\to 0^+$, which we will use below. 

\noindent\textbf{Step 2:} According to \fref{entrel}, we find
\bee
\left.    \frac{d S_\eta(\varrho_\eps(t))}{d t} \right|_{t=s} 
&=&\Tr \Big( \log (\eta+\varrho_\eps(s))P_{\varphi_\eps a_\eps(s)} \Big)-\int_{\Rm^d} \frac{n[\varrho_\eps(s) \log (\eta+\varrho_\eps(s)) ]}{n(s)} |\varphi_\eps|^2 dx. 
\eee
Let $P_0(t)$ the set of indices such that $\rho_p(t) \geq 1-\eta$ for $ p \in P_0(t)$. Then, since $\log(\eta+x) \geq 0$ for $x \geq 1-\eta$, and since $\log(\eta+x) \leq 0$ for $0 \leq x \leq 1-\eta$,
\begin{align*}
  \Tr \Big( \log (\eta+\varrho_\eps(s))P_{\varphi_\eps a_\eps(s)} \Big) \leq &|(\varphi_\eps a_\eps(s),\phi(s))|^2\log (\eta+\rho_1(s))\\
  &+\sum_{p \in P_0(t)}|(a_\eps(s)\varphi_\eps,\phi_p(s))|^2\log(\rho_p(s)+\eta).
\end{align*}
%\bee
%\left.    \frac{d S_\eta(\varrho_\eps(t))}{d t} \right|_{t=s} &\leq& |(\varphi_\eps a_\eps(s),\phi(s))|^2\log (\eta+\rho_1(s))+\sum_{p \in P_0(t)}|(a_\eps(s)\varphi_\eps,\phi_p(s))|^2\log(\rho_p(s)+\eta)\\
%&&-\int_{\Rm^d} \frac{n[\varrho_\eps(s) \log (\varrho_\eps(s)) ]}{n(s)} |\varphi_\eps|^2 dx. 
%\eee
Moreover, since $|\varphi_\eps|^2 \leq \eps^{-1} n$ , since $n(s) \geq n$ for all $s \geq 0$, and $x \log (x) \leq 0$ on $[0,1]$, and $-\log(\eta+x)$ is decreasing on $[0,1]$, we find
$$-\int_{\Rm^d} \frac{n[\varrho_\eps(s) \log (\eta+\varrho_\eps(s)) ]}{n(s)} |\varphi_\eps|^2 dx \leq - \eps^{-1} \Tr\big( \varrho_\eps(s) \log (\varrho_\eps(s)) \big).
$$
Choosing $\eta \in (0,1)$, we then obtain, since $\rho_p(t) \leq 1$, 
\bee
\left.    \frac{d S_\eta(\varrho_\eps(t))}{d t} \right|_{t=s} &\leq& |(\varphi_\eps a_\eps(s),\phi(s))|^2\log (\eta+\rho_1(s))+\log(2)\sum_{p \in P_0(t)}|(a_\eps(s)\varphi_\eps,\phi_p(s))|^2\\
&&- \eps^{-1} \Tr\big( \varrho_\eps(s) \log (\varrho_\eps(s)) \big). 
\eee
The bound $a_\eps(s) \leq 1$ for $s \geq 0$ shows that the second term on the right can be bounded by $\eps^{-1} \log(2)\|\varphi_\eps\|^2$. Regarding the third term, the logarithmic Sobolev inequality \fref{logsobo} yields, since $E(\varrho_\eps(t))\in C^1(\mathbb{R}^+)$ according to Lemma \ref{derivE},
\bee
0\leq - \Tr\big( \varrho_\eps(s) \log (\varrho_\eps(s)) \big) &\leq& -\int_{\Rm^d}n\log (n)  dx+\frac{d}{2}\log\left(\frac{e}{2\pi d}E(\varrho_\eps(s))\right)\\
&\leq & C+\frac{d}{2}\log\left(\frac{e}{2\pi d}\max_{s \in [0,1]} E(\varrho_\eps(s))\right)=C_{1,\eps}.
\eee
% Since there exists a constant $C_{1,\varepsilon}>0$ such that
% \begin{equation*}
% E(\varrho_\star) - C_{1,\varepsilon} s = E(\varrho_\eps(0)) - C_{1,\varepsilon} s  \leq E(\varrho_\eps(t)),
% \end{equation*}
% by using the fact that $E(\varrho_\eps(t))\in C^1(\mathbb{R}^+)$ according to Lemma \ref{derivE}, we can choose $s\in[0,s_0(\varepsilon)]$, for a certain $s_0(\varepsilon)>0$, to ensure that
% \begin{equation*}
% -\frac{d}{2}\log\left(\frac{e}{2\pi d}E(\varrho_\eps(s))\right) \leq -\frac{d}{2}\log\left(\frac{e}{4\pi d}E(\varrho_\star)\right) < \infty,
% \end{equation*}
% which gives, for a certain constant $C_{2,\varepsilon}>0$,
% \begin{equation*}
% 0\leq - \Tr\big( \varrho_\eps(s) \log (\varrho_\eps(s)) \big) \leq C_{2}.
% \end{equation*}
For any $t\in[0,1]$, we have therefore arrived at
$$
S_\eta(\varrho_\eps(t))-S_\eta(\varrho_\eps(0)) \leq \int_0^t |(\varphi_\eps a_\eps(s),\phi(s))|^2\log (\eta+\rho_1(s))ds + t C_{2,\eps}.
$$
for another constant $C_{2,\eps}$. Furthermore, since $E(\varrho_\eps(t))\in C^1(\mathbb{R}^+)$, there exists a constant $C_{3,\eps}$ such that
$$
E(\varrho_\eps(t))-E(\varrho_\eps(0)) \leq C_{3,\eps} t, \qquad \forall t\in [0,1].
$$
Gathering the previous estimates, we find, for $t\in[0,1]$,
$$
F_\eta(\varrho_\eps(t))-F_\eta(\varrho_\eps(0))+h(\eta) \leq \int_0^t |(\varphi_\eps a_\eps(s),\phi(s))|^2\log (\eta+\rho_1(s))ds + t C_{4,\eps}+h(\eta),
$$
for a new constant $C_{4,\eps}$. 

\noindent\textbf{Step 3:} We will show that we can choose $t$ and $\eta$ such that the right hand side is negative. First, write
$$
(\varphi_\eps,\phi)=(\phi_\eps,\phi)+\left( \left(\left(\frac{n}{n+\eps |\phi_\eps|^2}\right)^{1/2}-1\right)\phi_\eps,\phi\right).
$$
Since $\phi_\eps \to \phi$ in $L^2(\Rm^d)$ as $\eps \to 0$, the first term on the right above converges to one and the second to zero. We then choose $\eps$ sufficiently small so that $|(\phi_\eps,\phi)|>1/2$. Also, since 
$$
|(\varphi_\eps a_\eps(s),\phi(s))|^2 \to |(\varphi_\eps ,\phi)|^2\geq \frac{1}{4} \qquad \textrm{as} \quad s \to 0,
$$
there is an $s_0(\eps)>0$ such that,  for all $s\in [0,s_0(\eps)]$,
$$|(\varphi_\eps a_\eps(s),\phi(s))|^2 \geq \frac{1}{8}.
$$
Besides, since  $\rho_1(s) \to 0$ as $s \to 0$, there exists  $s_1(\eps,\delta)>0$ such that $0 \leq \rho_1(s) \leq \delta$, for all $s\in [0,s_1(\eps,\delta)]$.
Finally, set $\eta_0$ and $\delta$ sufficiently small so that
$$
\frac{1}{8} \log(\eta_0 +\delta) +1+C_{4,\eps}<0,
$$
and choose $\eta_1$ such that $h(\eta) \leq \min(s_0(\eps),s_1(\eps,\delta),1)$ for $\eta \leq \eta_1$ (we recall that such a $\eta_1$ exists since $h(\eta)\to 0$ as $\eta \to 0$). Then, for $\eta \leq \min(\eta_0,\eta_1)$, since $\log(x)$ is increasing, 
\begin{align*}
  \int_0^{h(\eta)}\Big( |(\varphi_\eps a_\eps(s),\phi(s))|^2&\log (\eta+\rho_1(s)) +1+C_{3,\eps}\Big)ds\\
  &\leq \int_0^{h(\eta)}\left( \frac{1}{8}\log (\eta+\delta) +1+C_{4,\eps}\right)ds<0.
\end{align*}
As a consequence, using \fref{posi}, for all $t \in [0, h(\eta)]$,
$$0 >h(\eta) +F_{\eta}(\varrho_\eps(t))-F_{\eta}(\varrho_\eps(0)) \geq F(\varrho_\eps(t))-F(\varrho_\eps(0)),$$
which contradicts the fact that $\vs$ is the unique minimizer of $F$. Hence, the kernel of $\vs$ is $\{0\}$, and the proof is complete.
\end{proof}

  \subsection{Euler-Lagrange equations}
  We prove here the relation $$\calQ_\star(\phi_p,\phi_q)=- \log(\rho_p) \delta_{pq}, \quad p,q \in \Nm.$$ In the previous section, we were able to use an arbitrary test function $\varphi$ in the perturbation since we only considered positive values for $t$. This ensured the positivity of $\varrho(t)$, with the drawback of only yielding an inequality in Euler-Lagrange equations (see e.g.  \fref{Eulineq}). This was enough though to prove that the minimizer is full rank. In order to obtain an equality in the Euler-Lagrange equations, we need to consider negative values of $t$ as well, which limits the choice of the test functions since $\varrho(t)$ has to be positive. We will choose below test functions related to the eigenfunctions $\phi_p$, for which the positivity of the perturbation holds. 

  We will need once again to regularize to justify the calculations. While we got away in the previous section with only regularizing the entropy term (this was justified by \fref{posi}), we need here to regularize as well the minimizer in order to obtain properly the Euler-Lagrange equation. Consider then the problem
  $$
  \min_{\varrho \in \calA} F_\eta(\varrho).
  $$
As \fref{miniprob}, the above problem admits a unique solution denoted by $\varrho_\eta$, with eigenvalues and eigenvectors $\{\rho_{p,\eta}\}_{p \in \Nm}$ and $\{\phi_{p,\eta}\}_{p \in \Nm}$.

\paragraph{Step 1: Euler-Lagrange equations for the regularized problem.} For $p$ and $q$ given, consider the operator
  $$
  P_\eta= \ket{\phi_{p,\eta}}\bra{\phi_{q,\eta}}+ \ket{\phi_{q,\eta}}\bra{\phi_{p,\eta}},
  $$
  that will be used to define a new direction of perturbation. It is not difficult to see that $\varrho_\eta+ tP_\eta$ is positive for $t > -\min(\rho_{p,\eta},\rho_{q,\eta})$. It is also clear that $\varrho_\eta+ tP_\eta$ is self-adjoint and trace class, and that $n_\eta(t):=n[\varrho_\eta+ tP_\eta]=n+2 t\Re (\phi_{p,\eta} \phi_{q,\eta}^*)$. We define then
  $$\varrho_\eta(t)= a_\eta(t) \big(\varrho_\eta+ tP_\eta \big) a_\eta(t), \qquad a_\eta(t)=\sqrt{\frac{n}{n_\eta(t)}}.$$
 The lemma below, proved in Section \ref{proofrhoa}, shows  that $\varrho_\eta(t)$ is in fact admissible for appropriate $t$.
  \begin{lemma} \label{rhoA}
  Let $t_0 = \min(\rho_{p,\eta},\rho_{q,\eta})/2$. Then $\varrho_\eta(t) \in \calA$ for any $t\in[-t_0,t_0]$.
    \end{lemma}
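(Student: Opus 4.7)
The plan is to verify each defining property of $\calA$ for $\varrho_\eta(t)$ on $[-t_0,t_0]$: self-adjointness, nonnegativity, trace-class, the density constraint $n[\varrho_\eta(t)]=n$, and finiteness of the kinetic energy.

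The first step is to show $\varrho_\eta + tP_\eta \geq 0$ for $|t|\leq t_0$. Since $P_\eta$ vanishes on the orthogonal complement of the span of $\phi_{p,\eta},\phi_{q,\eta}$, the operator $\varrho_\eta+tP_\eta$ coincides with $\varrho_\eta\geq 0$ there; on the two-dimensional span (for $p\neq q$), it is represented in the basis $\{\phi_{p,\eta},\phi_{q,\eta}\}$ by the matrix with diagonal entries $\rho_{p,\eta},\rho_{q,\eta}$ and off-diagonal entries $t$, which is positive semidefinite iff $t^2\leq\rho_{p,\eta}\rho_{q,\eta}$. This holds since $t_0\leq\min(\rho_{p,\eta},\rho_{q,\eta})\leq\sqrt{\rho_{p,\eta}\rho_{q,\eta}}$; the case $p=q$ reduces to $\rho_{p,\eta}+2t\geq 0$ on the range of $P_\eta$, also immediate.

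The key pointwise estimate is a lower bound $n_\eta(t)(x)\geq c_0\,n(x)$ for some $c_0>0$, which makes $a_\eta(t)=\sqrt{n/n_\eta(t)}$ a bounded multiplication operator. Decomposing $n=\rho_{p,\eta}|\phi_{p,\eta}|^2+\rho_{q,\eta}|\phi_{q,\eta}|^2+r$, where $r\geq 0$ is the density from the remaining eigenprojectors of $\varrho_\eta$, the sum
$$\rho_{p,\eta}|\phi_{p,\eta}(x)|^2+\rho_{q,\eta}|\phi_{q,\eta}(x)|^2+2t\,\Re\bigl(\phi_{p,\eta}(x)\phi_{q,\eta}^*(x)\bigr)$$
is the same quadratic form in $(\phi_{p,\eta}(x),\phi_{q,\eta}(x))$ as in step 1, whose smallest eigenvalue is uniformly bounded below by some $c_1>0$ on $[-t_0,t_0]$; this yields $n_\eta(t)\geq c_1(|\phi_{p,\eta}|^2+|\phi_{q,\eta}|^2)+r$, and comparing termwise with the decomposition of $n$ (using $\rho_{p,\eta},\rho_{q,\eta}\leq\Tr(\varrho_\eta)=1$) gives the bound with $c_0=\min(c_1,1)$.

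With $a_\eta(t)$ bounded, self-adjointness, nonnegativity, and trace-class of $\varrho_\eta(t)=a_\eta(t)(\varrho_\eta+tP_\eta)a_\eta(t)$ are immediate from the sandwich form, and $n[\varrho_\eta(t)]=a_\eta(t)^2 n_\eta(t)=n$ follows from the definition of $a_\eta(t)$. The main obstacle is the kinetic energy bound $E(\varrho_\eta(t))<\infty$. Using the formula $E(aBa)=\sum_j\beta_j\|\nabla(af_j)\|^2$ valid for any orthonormal spectral decomposition $B=\sum_j\beta_j|f_j\rangle\langle f_j|$ with $f_j\in H^1(\Rm^d)$, an elementary bound yields
$$E(\varrho_\eta(t))\leq 2\|a_\eta(t)\|_\infty^2\,E(\varrho_\eta+tP_\eta)+2\int_{\Rm^d}|\nabla a_\eta(t)|^2\,n_\eta(t)\,dx.$$
The first term is finite because $E(\varrho_\eta+tP_\eta)\leq E(\varrho_\eta)+2|t|(\|\nabla\phi_{p,\eta}\|^2+\|\nabla\phi_{q,\eta}\|^2)$ and $\phi_{p,\eta},\phi_{q,\eta}\in H^1(\Rm^d)$ since $E(\varrho_\eta)<\infty$. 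For the second, I would expand $\nabla a_\eta(t)$ in terms of $\nabla n$ and $\nabla n_\eta(t)$ (the latter bringing in $\nabla\phi_{p,\eta},\nabla\phi_{q,\eta}$), substitute $|\nabla n|^2=4n|\nabla\sqrt{n}|^2$, and absorb negative powers of $n_\eta(t)$ using $n_\eta(t)\geq c_0\,n$; the resulting integrals reduce to finite multiples of $\|\nabla\sqrt{n}\|^2$ and $\|\nabla\phi_{p,\eta}\|^2+\|\nabla\phi_{q,\eta}\|^2$, all finite by \fref{hypn} and $E(\varrho_\eta)<\infty$.
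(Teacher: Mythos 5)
Your proof is correct, and in two places it takes a genuinely different route from the paper. For the lower bound on $n_\eta(t)$, the paper does not go through the smallest eigenvalue of the $2\times 2$ block: it uses the elementary inequality $2|t|\,|\Re(\phi_{p,\eta}\phi_{q,\eta}^*)|\leq |t|(|\phi_{p,\eta}|^2+|\phi_{q,\eta}|^2)$ together with $|t|\leq \rho_{p,\eta}/2,\,\rho_{q,\eta}/2$ to get directly $n_\eta(t)\geq n/2$, i.e.\ $a_\eta(t)^2\leq 2$ uniformly; your matrix-eigenvalue argument yields the same type of bound, but with a constant $c_0$ that degenerates as $\min(\rho_{p,\eta},\rho_{q,\eta})\to 0$, which is fine here since $p,q,\eta$ are fixed but is less clean. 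For the kinetic energy, the paper does not use a crude triangle/Cauchy--Schwarz estimate; it re-runs the algebraic identity from Step~2 of the proof of Lemma~\ref{derivE} and obtains the exact formula
$E(\varrho_\eta(t))=\int(|\nabla\sqrt n|^2 - a_\eta(t)^2|\nabla\sqrt{n_\eta(t)}|^2 + a_\eta(t)^2(k[\varrho_\eta]+2t\Re(\nabla\phi_{p,\eta}\cdot\nabla\phi_{q,\eta}^*)))\,dx$,
after which finiteness follows from a pointwise bound on $|\nabla\sqrt{n_\eta(t)}|$. Your bound $E(\varrho_\eta(t))\leq 2\|a_\eta(t)\|_\infty^2 E(\varrho_\eta+tP_\eta)+2\int|\nabla a_\eta(t)|^2 n_\eta(t)\,dx$ is more elementary and self-contained (it does not depend on the computations of Lemma~\ref{derivE}), but it throws away the exact identity which the paper reuses later for the differentiability of $E(\varrho_\eta(t))$; if the lemma is viewed in isolation, your route is perfectly adequate.
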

  
 We want to apply Lemma \ref{lemS} next to obtain the Euler-Lagrange equations. For this, we will see in Lemma \ref{convreg} that $\rho_{j,\eta} \to \rho_{j}$ as $\eta \to 0$, for all $j \in \Nm$. As a consequence, since $\rho_j>0$ for all $j \in \Nm$ according to Proposition \ref{fullrank}, there exists $\eta_0>0$ such that $\rho_{p,\eta}>0$ and $\rho_{q,\eta}>0$ for all $\eta \in (0,\eta_0)$. Since $n[\varrho_\eta]=n$, this leads to
  \be \label{boundphi}
  |\phi_{p,\eta}| \leq \rho_{p,\eta}^{-1/2} \sqrt{n}, \qquad a.e.,
  \ee
  with a similar estimate for $\phi_{q,\eta}$. An easy adaptation of Lemmas \ref{C1} and \ref{lemS} shows then that $\varrho_\eta \in C^1([-t_0,t_0],\calJ_1)$, and that
    $$
\left.    \frac{d S(\varrho_\eta(t))}{d t} \right|_{t=0} = 2 \log (\eta+\rho_{p,\eta}) \delta_{pq}- 2 \Re \int_{\Rm^d} \frac{n[\varrho_{\eta} \log(\eta+ \varrho_\eta)]}{n} \phi_{p,\eta} \phi_{q,\eta}^* dx.
$$

We consider now the kinetic energy term. Since $\varrho_\eta \in \calE$, we have the relation
  \be \label{KE}
  \|\nabla \phi_{p,\eta}\|^2 \leq \rho_{p,\eta}^{-1} E(\varrho_\eta)<\infty,
  \ee
  with a similar estimate for $\phi_{q,\eta}$. With \fref{boundphi}, this shows that $\phi_{p,\eta}$ and $\phi_{q,\eta}$ belong to $H^1_\omega(\Rm^d)$. Adapting Lemma \ref{derivE} then yields 
$$
  \left.\frac { d E(\varrho_\eta(t))}{dt}\right|_{t=0}= 2 \Re \calQ_e(\phi_{p,\eta},\phi_{q,\eta}).
$$
Finally, since $\varrho_\eta$ is the minimizer, the derivative of $F_\eta(\varrho_\eta(t))$ at $t=0$ vanishes, and we find, gathering the above results,
\be \label{real}
2 \Re \calQ_\eta(\phi_{p,\eta},\phi_{q,\eta})=-2 \log (\eta+\rho_{p,\eta}) \delta_{pq},
\ee
where
\be
\calQ_{\eta}(\varphi)
=\int_{\Rm^d} n \left|\nabla \left(\frac{\varphi}{\sqrt{n}}\right)\right|^2dx+ \int_{\Rm^d} \frac{|\nabla \sqrt{n}|^2- k[\varrho_\eta] - n[\varrho_\eta \log (\eta+\varrho_\eta)] }{n}|\varphi|^2dx.
\ee
Note that direct calculations show that $\calQ_\eta$ is actually equal to 
\bee
\calQ_{\eta}(\varphi)&=&\calQ_e(\varphi)-\int_{\Rm^d} \frac{n[\varrho_\eta \log (\eta+\varrho_\eta)] }{n}|\varphi|^2dx\\
&=&\int_{\Rm^d} \left|\nabla \varphi\right|^2dx- \int_{\Rm^d} \frac{\nabla \sqrt{n} \cdot \nabla |\varphi|^2}{\sqrt{n}}dx\\
&&+\int_{\Rm^d} \frac{2|\nabla \sqrt{n}|^2- k[\varrho_\eta] - n[\varrho_\eta \log (\eta+\varrho_\eta)] }{n}|\varphi|^2dx.
\eee
Replacing $P_\eta$ in the definition of $\varrho_\eta(t)$ by $i\ket{\phi_{p,\eta}}\bra{\phi_{q,\eta}}-i\ket{\phi_{q,\eta}}\bra{\phi_{p,\eta}}$, we find that $n[\varrho_\eta(t)]=n-2t \Im(\phi_{p,\eta} \phi_{q,\eta}^*)$. Repeating the above procedure, we find 
$$
2 \Im \calQ_\eta(\phi_{p,\eta},\phi_{q,\eta})=0,
$$
which, together with \fref{real}, yields
$$
\calQ_\eta(\phi_{p,\eta},\phi_{q,\eta})=-\log (\eta+\rho_{p,\eta}) \delta_{pq}.
$$
Note that we are able to obtain this equality since $\varrho_\eta$ is the minimizer of $F_\eta$, had we just regularized the entropy term we would have only obtained an inequality of the form  $\calQ_\eta(\phi_{p},\phi_{q})\geq -\log (\eta+\rho_{p}) \delta_{pq}$.

\paragraph{Step 2: Passing to the limit.} Choose for instance for $\eta$ the sequence $\eta\equiv\eta_\ell=1/\ell$ which converges to zero as $\ell \to \infty$. The following lemma, proved in section \fref{proofconvreg}, lists the convergence properties of $\varrho_\eta$.

\begin{lemma} \label{convreg}Let $\varrho_\ell:=\varrho_{\eta_\ell}$. Then, as $\ell \to \infty$: 
\begin{itemize}
\item[(i)] $\varrho_\ell$ converges to $\vs$ in $\calJ_1$.
\item[(ii)] $\sqrt{H_0} {\varrho_\ell} \sqrt{H_0}$ converges to $\sqrt{H_0} {\vs} \sqrt{H_0}$ in $\calJ_1$, and 
$\sqrt{H_0} \sqrt{\varrho_\ell}$ converges to $\sqrt{H_0} \sqrt{\vs}$ in $\calJ_2$.
\item[(iii)] $\forall p \in \Nm$, $\rho_{p,\ell}$ converges to $\rho_p$, where $\{\rho_{p,\ell}\}_{p \in \NN}$ are the eigenvalues of $\varrho_\ell$ and $\{\rho_{p}\}_{p \in \NN}$ those of $\varrho_\star$.
\item[(iv)] there exist a sequence of orthonormal eigenbasis 
$\{\phi_{p,\ell}\}_{p\in \Nm}$ of $\varrho_{\ell}$ and an orthonormal eigenbasis $\{\phi_{p}\}_{p\in \Nm}$ of $\vs$ such that, $\forall p\in \Nm$, 
$$\lim_{\ell\to +\infty}\|\phi_{p,\ell}-\phi_p\|=0\quad\textrm{and}\quad \lim_{\ell\to +\infty}\|\sqrt{\rho_{p,\ell}}\,\nabla \phi_{p,\ell}-\sqrt{\rho_{p}}\, \nabla \phi_p\|=0.$$
\item[(v)] $\beta_{\eta_\ell}(\varrho_\ell)$ converges to $\beta(\vs)$ in $\calJ_1$.
\item[(vi)] $\varrho_\ell \log (\eta_\ell+\varrho_\ell)$ converges to $\vs \log (\vs)$ in $\calJ_1$. 
\end{itemize}
\end{lemma}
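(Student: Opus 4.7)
The plan is to use minimality of $\varrho_\ell$ for $F_{\eta_\ell}$ together with uniqueness of $\vs$ to obtain weak convergence, upgrade to strong $\calJ_1$ convergence via matching of norms, and then deduce the eigenvalue/eigenvector statements from spectral perturbation theory and the entropy statements from functional calculus plus dominated convergence on the eigenvalue sequence.

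\textbf{Step 1 (uniform $\calE$-bound and identification of the limit).} Since $\vs\in\calA$ competes with $\varrho_\ell$, minimality gives $F_{\eta_\ell}(\varrho_\ell)\leq F_{\eta_\ell}(\vs)$; the inequality \fref{regbet} and dominated convergence on the eigenvalues of $\vs$ yield $F_{\eta_\ell}(\vs)\to F(\vs)$, so the left-hand side is uniformly bounded. Using $S_{\eta_\ell}\geq S$ together with the logarithmic Sobolev inequality \fref{logsobo} applied to $\varrho_\ell$ (whose density is the fixed $n$ with $n\log n\in L^1$) gives $S(\varrho_\ell)\geq -C_1 - C_2\log E(\varrho_\ell)$, which combined with the bound on $F_{\eta_\ell}(\varrho_\ell)$ produces $\sup_\ell E(\varrho_\ell)<\infty$ and hence $\sup_\ell \|\varrho_\ell\|_\calE<\infty$. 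By the standard $\calE$-compactness arguments of \cite{MP-KRM}, a subsequence converges weakly to some $\tilde\varrho\in\calA$, and by lower semicontinuity of $E$ and $S$ the chain
$$F(\tilde\varrho)\leq \liminf_\ell F(\varrho_\ell)\leq \liminf_\ell F_{\eta_\ell}(\varrho_\ell)\leq \lim_\ell F_{\eta_\ell}(\vs)=F(\vs)$$
forces $\tilde\varrho=\vs$ by uniqueness of the minimizer, so the whole sequence converges.

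\textbf{Step 2 (strong convergence, items (i)--(ii)).} The chain above saturates in the limit, and running the argument of Step 1 in reverse with $\varrho_\ell$ in place of $\vs$ gives $S_{\eta_\ell}(\varrho_\ell)-S(\varrho_\ell)\to 0$, so $E(\varrho_\ell)\to E(\vs)$ and $S(\varrho_\ell)\to S(\vs)$ separately. For nonnegative trace-class operators, weak convergence together with convergence of the trace implies strong $\calJ_1$ convergence (a classical criterion), which yields (i). The same principle applied to $\sqrt{H_0}\varrho_\ell\sqrt{H_0}$, combined with $\Tr(\sqrt{H_0}\varrho_\ell\sqrt{H_0})=E(\varrho_\ell)\to E(\vs)$, gives the first half of (ii); for $\sqrt{H_0}\sqrt{\varrho_\ell}$, the Hilbert-space structure of $\calJ_2$ and the identity $\|\sqrt{H_0}\sqrt{\varrho_\ell}\|_{\calJ_2}^2=E(\varrho_\ell)$ allow the same upgrade from weak to strong convergence.

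\textbf{Step 3 (items (iii)--(vi) and main obstacle).} Item (iii) follows from the min-max principle, since the $p$-th eigenvalue is Lipschitz in operator norm and (i) implies operator-norm convergence. For (iv), the spectral projections $P_\ell^{(j)}$ and $P^{(j)}$ onto the $j$-th distinct eigencluster of $\varrho_\ell$ and $\vs$ converge in operator norm by classical Riesz-projection perturbation theory, from which an ordered orthonormal eigenbasis with $\phi_{p,\ell}\to\phi_p$ in $L^2$ can be extracted; the weighted gradient statement follows by applying (ii) to the vectors $\phi_{p,\ell}$ via $\sqrt{H_0}\sqrt{\varrho_\ell}\phi_{p,\ell}=\sqrt{\rho_{p,\ell}}\sqrt{H_0}\phi_{p,\ell}$. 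For (v) and (vi), the pointwise limits $\beta_{\eta_\ell}(\rho_{p,\ell})\to\beta(\rho_p)$ and $\rho_{p,\ell}\log(\eta_\ell+\rho_{p,\ell})\to\rho_p\log\rho_p$ from (iii), together with a uniform $\ell^1$-majorant provided by the entropy bound of Step 1, give $\calJ_1$ convergence by dominated convergence on the eigenvalue sequence. The subtlest point is the decoupling $S_{\eta_\ell}(\varrho_\ell)-S(\varrho_\ell)\to 0$ used in Step 2: eigenvalues of $\varrho_\ell$ may cluster at zero and $\beta_\eta-\beta$ degenerates there, so a careful splitting of the eigenvalue sum based on the uniform entropy bound is required to make \fref{regbet} dominate uniformly in $\ell$. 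The construction of a jointly converging eigenbasis in (iv) when eigenvalues have multiplicity or accumulate is the secondary difficulty, resolved by the spectral-projection decomposition described above.
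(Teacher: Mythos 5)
Your outline for items (i)--(iv) is essentially the paper's argument: a uniform $\calE$-bound via the logarithmic Sobolev inequality, identification of the limit through the minimality chain, the classical ``weak $\calJ_1$ convergence plus convergence of the trace gives strong $\calJ_1$ convergence'' criterion (this is exactly what the paper's Lemma~\ref{strongconv} packages), and operator-norm perturbation theory for the eigenpairs together with the identity $\sqrt{H_0}\sqrt{\varrho_\ell}\phi_{p,\ell}=\sqrt{\rho_{p,\ell}}\sqrt{H_0}\phi_{p,\ell}$ for the weighted gradients. (One caveat in Step 2: the decoupling $E(\varrho_\ell)\to E(\vs)$ and $S(\varrho_\ell)\to S(\vs)$ requires lower semicontinuity of $S$, which needs justification; the paper imports this from \cite{MP-KRM}.)

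There is, however, a genuine gap in your treatment of (v) and (vi). You propose to obtain $\calJ_1$ convergence of $\beta_{\eta_\ell}(\varrho_\ell)$ to $\beta(\vs)$ by ``dominated convergence on the eigenvalue sequence.'' This does not work as stated: $\varrho_\ell$ and $\vs$ do not commute and do not share an eigenbasis, so
\[
\|\beta_{\eta_\ell}(\varrho_\ell)-\beta(\vs)\|_{\calJ_1}\neq \sum_{p}\bigl|\beta_{\eta_\ell}(\rho_{p,\ell})-\beta(\rho_p)\bigr|,
\]
and controlling the right-hand side says nothing directly about the left. Moreover, the ``uniform $\ell^1$-majorant'' you invoke does not exist in the form required by dominated convergence, since the sequence $\{\rho_{p,\ell}\}_p$ itself changes with $\ell$; the entropy bound gives a uniform bound on the sum, not a fixed dominating sequence. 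What an eigenvalue-level dominated-convergence argument \emph{can} give you is convergence of the trace \emph{norms} $\Tr|\beta_{\eta_\ell}(\varrho_\ell)|\to \Tr|\beta(\vs)|$, and the missing ingredient is to combine this with weak operator convergence via a Gr\"umm-type theorem (the paper cites \cite[Theorem 2.21 and addendum H]{Simon-trace}): weak convergence plus convergence of trace norms implies strong $\calJ_1$ convergence, even without nonnegativity. For (vi) the paper then reduces to (v) by the algebraic identity $x\log(\eta+x)-x\log x = \beta_\eta(x)-\beta(x)-g_\eta(x)$ with $g_\eta(x)=\eta(\log\eta - \log(\eta+x))$ and $|g_\eta(x)|\leq|x|$, an identity you do not supply. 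Without the Gr\"umm step your argument for (v)--(vi) is incomplete.
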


Following the above lemma, we suppose that the basis of eigenvectors $\{\phi_{p}\}_{p\in \Nm}$ introduced in the previous sections is the one of item (iv). We have then the

\begin{proposition} \label{eigeq} For all $p,q \in \NN$,

$$  \calQ_\star(\phi_p,\phi_q)= - \delta_{pq} \log (\rho_q). $$ 
 Note that $\log(\rho_p)$ is well defined for all $p \in \NN$ since $\rho_p>0$ according to Proposition \ref{fullrank}.
  \end{proposition}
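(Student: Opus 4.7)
The plan is to pass to the limit $\ell\to\infty$ in the regularized identity $\calQ_{\eta_\ell}(\phi_{p,\ell},\phi_{q,\ell})=-\log(\eta_\ell+\rho_{p,\ell})\delta_{pq}$ established in Step 1, and to identify the limit as $\calQ_\star(\phi_p,\phi_q)$. The right-hand side is immediate: Lemma \ref{convreg}(iii) gives $\rho_{p,\ell}\to\rho_p$, and since $\rho_p>0$ by Proposition \ref{fullrank}, we have $\log(\eta_\ell+\rho_{p,\ell})\to\log(\rho_p)$. The content of the proof lies in the left-hand side.

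For the gradient part of $\calQ_{\eta_\ell}$, I would expand
$$\int_{\Rm^d} n\,\nabla\!\left(\tfrac{\phi_{p,\ell}^*}{\sqrt{n}}\right)\!\cdot\nabla\!\left(\tfrac{\phi_{q,\ell}}{\sqrt{n}}\right)dx=\int \nabla\phi_{p,\ell}^*\!\cdot\!\nabla\phi_{q,\ell}\,dx-\int \tfrac{\nabla\sqrt{n}}{\sqrt{n}}\!\cdot\!\nabla(\phi_{p,\ell}^*\phi_{q,\ell})\,dx+\int \tfrac{|\nabla\sqrt{n}|^2}{n}\phi_{p,\ell}^*\phi_{q,\ell}\,dx.$$
From Lemma \ref{convreg}(iv) and $\rho_p,\rho_q>0$, the convergence $\sqrt{\rho_{p,\ell}}\nabla\phi_{p,\ell}\to\sqrt{\rho_p}\nabla\phi_p$ in $L^2$ upgrades to $\nabla\phi_{p,\ell}\to\nabla\phi_p$ in $L^2$, and similarly for $q$. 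Together with $\phi_{p,\ell}\to\phi_p$ in $L^2$ (hence a.e.\ along a subsequence) and the uniform pointwise bound $|\phi_{p,\ell}/\sqrt{n}|\leq\rho_{p,\ell}^{-1/2}$ from \fref{boundphi}, each of the three integrals passes to the limit by a direct dominated convergence argument, using that $|\nabla\sqrt{n}|\in L^2$ controls the products.

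For the potential part, I must pass to the limit in
$$\int_{\Rm^d}\!\frac{|\nabla\sqrt{n}|^2-k[\varrho_\ell]-n[\varrho_\ell\log(\eta_\ell+\varrho_\ell)]}{n}\,\phi_{p,\ell}^*\phi_{q,\ell}\,dx.$$
The $|\nabla\sqrt{n}|^2/n$ piece is handled as above. For the other two pieces, Lemma \ref{convreg}(ii) yields $k[\varrho_\ell]\to k[\vs]$ in $L^1(\Rm^d)$ (by duality against bounded test functions), and Lemma \ref{convreg}(vi) yields $n[\varrho_\ell\log(\eta_\ell+\varrho_\ell)]\to n[\vs\log\vs]$ in $L^1(\Rm^d)$. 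The multiplier $\phi_{p,\ell}^*\phi_{q,\ell}/n$ is uniformly bounded by $(\rho_{p,\ell}\rho_{q,\ell})^{-1/2}$ and converges a.e.\ to $\phi_p^*\phi_q/n$ along the same subsequence. Vitali's theorem (the $L^1$ convergence provides uniform integrability, and the bounded a.e.\ convergent factor then gives convergence of the integral) finishes this step. Uniqueness of the limit removes the subsequence, yielding $\calQ_{\eta_\ell}(\phi_{p,\ell},\phi_{q,\ell})\to\calQ_\star(\phi_p,\phi_q)$ and hence the claimed identity.

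The main obstacle is precisely this last passage to the limit: both $k[\vs]/n$ and $n[\vs\log\vs]/n$ are only integrable against $n\,dx$, not $dx$, so one cannot hope to pass to the limit by mere weak convergence of the eigenfunctions in $L^2$. The cancellation is structural: the eigenfunction bound $|\phi_{p,\ell}|^2\leq\rho_{p,\ell}^{-1}n$, which remains uniform in $\ell$ only because $\rho_p>0$, exactly compensates the weight $1/n$. Thus Proposition \ref{fullrank} is essential here, not only to make $\log(\rho_p)$ meaningful, but also to justify the convergence of the self-consistent terms in the quadratic form.
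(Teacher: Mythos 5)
Your argument is correct and is essentially the paper's proof: pass to the limit $\ell\to\infty$ in the regularized Euler--Lagrange identity $\calQ_{\eta_\ell}(\phi_{p,\ell},\phi_{q,\ell})=-\log(\eta_\ell+\rho_{p,\ell})\delta_{pq}$, using Lemma \ref{convreg} together with the pointwise bound $|\phi_{p,\ell}|^2\le\rho_{p,\ell}^{-1}n$ and dominated/Vitali convergence to handle the $L^1$ self-consistent terms. The only cosmetic difference is that the paper works with the rescaled quantity $\sqrt{\rho_{p,\ell}\rho_{q,\ell}}\,\calQ_{\eta_\ell}(\phi_{p,\ell},\phi_{q,\ell})$ so that the $\ell$-independent bound $\sqrt{\rho_{q,\ell}}|\phi_{q,\ell}|/\sqrt{n}\le 1$ and the convergence $\sqrt{\rho_{p,\ell}}\nabla\phi_{p,\ell}\to\sqrt{\rho_p}\nabla\phi_p$ from Lemma \ref{convreg}(iv) apply verbatim, invoking Proposition \ref{fullrank} only at the end to divide out $\sqrt{\rho_p\rho_q}$, whereas you invoke $\rho_p,\rho_q>0$ up front to upgrade the gradient convergence to $\nabla\phi_{p,\ell}\to\nabla\phi_p$ in $L^2$ and to make the bound $|\phi_{p,\ell}/\sqrt{n}|\le\rho_{p,\ell}^{-1/2}$ eventually uniform in $\ell$.
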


  \begin{proof} Set first $\rho_{p,\ell}:=\rho_{p,\eta_\ell}$ and $\phi_{p,\ell}:=\phi_{p,\eta_\ell}$. Then, since $\rho_p>0$ for all $p\in \NN$, Lemma \ref{convreg} (iii) shows that $\log(\eta_\ell+\rho_{p,\ell}) \to \log(\rho_p)$ for all $p \in \Nm$ as $\ell \to \infty$. Consider now $\sqrt{\rho_{p,\ell} \rho_{q,\ell}}\calQ_{\eta_\ell}(\phi_{p,\ell},\phi_{q,\ell})$, that we split into five terms $Q_i^\ell$ defined below, $i=1,\cdots, 5$. Then, by Lemma \ref{convreg} (iii)-(iv), as $\ell \to \infty$,
$$
Q_1^\ell:=\sqrt{\rho_{p,\ell} \rho_{q,\ell}}\int_{\Rm^d} \nabla \phi^*_{p,\ell} \cdot \nabla\phi_{q,\ell} dx \to \sqrt{\rho_{p} \rho_{q}}\int_{\Rm^d} \nabla \phi^*_{p} \cdot \nabla\phi_{q} dx.
$$
Furthermore,
\bee
Q_2^\ell&:=&- \sqrt{\rho_{p,\ell} \rho_{q,\ell}}\int_{\Rm^d} \frac{\nabla \sqrt{n} \cdot \nabla (\phi^*_{p,\ell} \phi_{q,\ell})}{\sqrt{n}}dx\\
&=&- \sqrt{\rho_{p,\ell} \rho_{q,\ell}}\int_{\Rm^d} \frac{\nabla \sqrt{n} \cdot \left(\nabla (\phi^*_{p,\ell}) \phi_{q,\ell}+ \nabla (\phi_{q,\ell}) \phi^*_{p,\ell}\right)}{\sqrt{n}}dx\\
&:=&Q_{2,1}^\ell+Q_{2,2}^\ell.
\eee
We write
\bea \label{Q2}
Q_{2,1}^\ell&=&- \sqrt{\rho_{p} \rho_{q}} \int_{\Rm^d} \frac{\nabla \sqrt{n} \cdot \nabla \phi^*_{p} \phi_{q}}{\sqrt{n}}dx\\
\nonumber &&-\sqrt{\rho_{q,\ell}} \int_{\Rm^d} \frac{\nabla \sqrt{n} \cdot \left(\nabla (\sqrt{\rho_{p,\ell}}\phi^*_{p,\ell}-\sqrt{\rho_{p}}\phi^*_{p}) \phi_{q,\ell} \right)}{\sqrt{n}}dx\\ \nonumber
&&- \sqrt{\rho_{p} } \int_{\Rm^d} \frac{\nabla \sqrt{n} \cdot \nabla \phi^*_{p}
(\sqrt{\rho_{q,\ell}}\phi_{q,\ell}-\sqrt{\rho_{q}}\phi_q)}{\sqrt{n}}dx.
\eea 
The second term on the right is controlled by
$$
 \|\nabla \sqrt{n}\| \|\nabla (\sqrt{\rho_{p,\ell}}\phi^*_{p,\ell}-\sqrt{\rho_{p}}\phi^*_{p})\| \| \sqrt{\rho_{q_\ell}} \phi_{q,\ell} /\sqrt{n}\|_{L^\infty},
$$
and goes to zero as $\ell \to \infty$ because of Lemma \ref{convreg} (iv) and the fact that 
\be \label{bound1} \frac{\sqrt{\rho_{q,\ell}} | \phi_{q,\ell}|}{ \sqrt{n}} \leq 1, \quad a.e.
\ee
 since $n[\varrho_\ell]=n$. For the third term, we deduce from Lemma \ref{convreg} (iii)-(iv) that there is a subsequence $\{k_\ell \}_{\ell \in \NN}$ such that $\sqrt{\rho_{q,k_\ell}}\phi_{q,k_\ell}$ converges to $\sqrt{\rho_{q}}\phi_q$ a.e.. Since $\nabla \sqrt{n}$ and $\nabla \phi_p$ belong to $L^2(\Rm^d)$, and $ \sqrt{\rho_{q,k_\ell}} |\phi_{q,k_\ell}| / \sqrt{n} \leq 1$ a.e. as well as $ \sqrt{\rho_{q}} |\phi_{q}| / \sqrt{n} \leq 1$ a.e., we can invoke dominated convergence and obtain that the limit of the third term is zero. We have therefore obtained that $Q_{2,1}^{k_\ell}$ converges as $\ell \to \infty$ to the first term on the right in \fref{Q2}. The term $\calQ_{2,2}^\ell$ is handled exactly as $\calQ_{2,1}^\ell$.

We treat now the term $Q_3^\ell$ that reads
\bee
Q_3^\ell&:=&- \sqrt{\rho_{p,\ell} \rho_{q,\ell}} \int_{\Rm^d} \frac{ k[\varrho_\ell] \phi^*_{p,\ell} \phi_{q,\ell}}{n}dx
\eee
According to Lemma \ref{convreg} (ii), we can conclude that $k[\varrho_\ell]$ converges to $k[\vs]$ strongly in $L^1(\Rm^d)$, and we have, using \fref{bound1},
\bee
\lim_{\ell \to \infty }Q_3^{\ell}&:=&- \lim_{\ell \to \infty } \sqrt{\rho_{p,k'_\ell} \rho_{q,k'_\ell}} \int_{\Rm^d} \frac{ k[\vs] \phi^*_{p,\ell} \phi_{q,\ell}}{n}dx.
\eee
Proceeding in the same way as $Q^\ell_{2,1}$, with dominated convergence and \fref{bound1}, we find
\bee
\lim_{\ell \to \infty }Q_3^{k_\ell}
&=&- \sqrt{\rho_{p} \rho_{q}} \int_{\Rm^d} \frac{ k[\vs] \phi^*_{p} \phi_{q}}{n}dx.
\eee
The term
\bee
Q_4^\ell&:=&- \sqrt{\rho_{p,\ell} \rho_{q,\ell}} \int_{\Rm^d} \frac{ n [\varrho_\ell \log(\eta_\ell+\varrho_\ell)] \phi^*_{p,\ell} \phi_{q,\ell}}{n}dx
\eee
is treated exactly as $Q_3^\ell$ since $n [\varrho_\ell \log(\eta_\ell+\varrho_\ell)] \to  n [\vs \log(\vs)]$ in $L^1(\Rm^d)$ according to  Lemma \ref{convreg} (vi). Finally,
\bee
\lim_{\ell \to \infty }Q_5^{k_\ell}&:=&2\lim_{\ell \to \infty } \sqrt{\rho_{p,k_\ell} \rho_{q,k_\ell}} \int_{\Rm^d} \frac{ |\nabla \sqrt{n}|^2 \phi^*_{p,k_\ell} \phi_{q,k_\ell}}{n}dx\\
&=&2 \sqrt{\rho_{p} \rho_{q}} \int_{\Rm^d} \frac{ |\nabla \sqrt{n}|^2\phi^*_{p} \phi_{q}}{n}dx,
\eee
as an application, as earlier, of dominated convergence and \fref{bound1}. Gathering the different limits, we find
$$\sqrt{\rho_p}\sqrt{\rho_q}  \left(\calQ_\star(\phi_p,\phi_q) - \delta_{pq} \log(\rho_p)\right)=0,
$$
which ends the proof since $\rho_p$ and $\rho_q$ are strictly positive according to Proposition \ref{fullrank}. 
 \end{proof}

\paragraph{An important estimate.} The next result is central in proving the eigenvalue relation \fref{minieig}.

  \begin{proposition} \label{estimH} Let $\varphi \in H^1_\omega(\Rm^d)$. Then, 
   $$ \| \varphi \|^2_{\frakH} \leq \calQ_\star(\varphi).$$
 \end{proposition}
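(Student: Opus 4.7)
The plan is to derive the inequality first for each regularized minimizer $\varrho_\eta$ of $F_\eta$ via a one-sided Euler--Lagrange argument with the positive perturbation $P_\varphi = \ket{\varphi}\bra{\varphi}$, and then pass to the limit $\eta \to 0$ using Lemma \ref{convreg}. The key point is that since $P_\varphi \geq 0$, positivity $\varrho_\eta + t P_\varphi \geq 0$ holds automatically for $t \geq 0$, so the differentiability machinery of Lemmas \ref{derivE} and \ref{lemS} applies to the general test function $\varphi$ rather than only to the symmetric combinations of eigenvectors used in the proof of Proposition \ref{eigeq}.

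Concretely, I first restrict to $\varphi \in H^1_\omega(\Rm^d)$ satisfying the auxiliary bound $|\varphi| \leq M \sqrt{n}$ a.e., which is the class in which Lemma \ref{lemS} applies. Setting $\varrho_\eta(t) = a_\eta(t)(\varrho_\eta + t P_\varphi) a_\eta(t)$ and differentiating $F_\eta(\varrho_\eta(t))$ at $t = 0^+$, the minimality of $\varrho_\eta$ for $F_\eta$ forces this derivative to be nonnegative. Using the alternative expression for $\calQ_\eta$ recorded just after \fref{real} together with $(\varphi, \log(\eta + \varrho_\eta)\varphi) = \sum_p \log(\eta + \rho_{p,\eta})|(\phi_{p,\eta},\varphi)|^2$, this rearranges to
\[
\calQ_\eta(\varphi) \;\geq\; -\sum_{p \in \NN} \log(\eta + \rho_{p,\eta})\,|(\phi_{p,\eta},\varphi)|^2.
\]

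Next I pass to the limit along $\eta_\ell = 1/\ell \to 0$. For the left-hand side, the difference $\calQ_{\eta_\ell}(\varphi) - \calQ_\star(\varphi)$ is an integral weighted by $|\varphi|^2/n \leq M^2$, and the $L^1$ convergences provided by Lemma \ref{convreg} (ii), (vi) yield $\calQ_{\eta_\ell}(\varphi) \to \calQ_\star(\varphi)$. For the right-hand side, Lemma \ref{convreg} (iii)--(iv) give termwise convergence of each summand to $-\log(\rho_p)|(\phi_p,\varphi)|^2$. Since $\sum_p \rho_{p,\ell} = 1$, only finitely many indices can satisfy $\eta_\ell + \rho_{p,\ell} > 1$, so the tail of the series consists of nonnegative terms; Fatou's lemma applied to the tail, combined with termwise convergence of the finite head, gives
\[
\liminf_{\ell \to \infty}\Bigl(-\sum_{p \in \NN} \log(\eta_\ell + \rho_{p,\ell})\,|(\phi_{p,\ell},\varphi)|^2\Bigr) \;\geq\; \|\varphi\|_\frakH^2.
\]
This establishes $\calQ_\star(\varphi) \geq \|\varphi\|_\frakH^2$ under the auxiliary bound.

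Finally, I remove the auxiliary bound by truncation: define $\varphi_M = \varphi\,\chi(|\varphi|^2/(M^2 n))$ with $\chi$ a fixed smooth cutoff equal to $1$ on $[0,1/4]$ and supported in $[0,1]$, so that $|\varphi_M| \leq M\sqrt n$ and $\varphi_M \in H^1_\omega(\Rm^d)$. Applying the previous conclusion to $\varphi_M$ and letting $M \to \infty$, pointwise convergence $\varphi_M \to \varphi$ together with dominated convergence for the integrals defining $\calQ_\star$ and Fatou's lemma for $\|\cdot\|_\frakH^2$ should yield the general case. The main obstacle I expect is precisely this truncation step: the cutoff depends nonlinearly on $\varphi$ through $|\varphi|^2/n$, so controlling the gradient term $\int n|\nabla(\varphi_M/\sqrt n)|^2\,dx$ requires handling derivatives of $|\varphi|^2/n$, in which $\nabla\sqrt n$ (only in $L^2$) is multiplied by $|\varphi|$; producing a dominating function uniformly in $M$ is delicate and must rely solely on the defining integrability $\int_{\Rm^d}|\varphi|^2\omega\,dx<\infty$ and $\nabla\varphi\in L^2(\Rm^d)$.
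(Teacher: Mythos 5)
Your overall strategy matches the paper's: derive a one-sided Euler--Lagrange inequality for the regularized minimizer $\varrho_\eta$ using the positive perturbation $P_{\varphi}$ (hence only $t\geq 0$), then pass to the limit $\eta_\ell\to 0$ with the help of Lemma~\ref{convreg}, using Fatou on the tail of $-\sum_p\log(\eta_\ell+\rho_{p,\ell})|(\varphi_\eps,\phi_{p,\ell})|^2$ and termwise convergence on a finite head. The paper does, however, interleave the two regularizations: it replaces $\varphi$ by $\varphi_\eps=\varphi\big(n/(n+\eps|\varphi|^2)\big)^{1/2}$ from the outset (so that Lemma~\ref{lemS} applies), derives the inequality \fref{Eulineq} for $(\eta_\ell,\eps)$ jointly, and then sends $\ell\to\infty$ followed by $\eps\to 0$. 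Your version decouples the two limits; both orderings work.

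The place where your proposal genuinely stops short is the step you yourself flag. You leave it open whether the gradient of your cutoff $\varphi_M=\varphi\,\chi(|\varphi|^2/(M^2 n))$ admits an $M$-uniform $L^2$-dominating function. It does, and the observation that closes the gap is also the one that makes the paper's explicit regularization work. For the paper's $\varphi_\eps$ one computes \eqref{eq:derivativephieps} and obtains the bound \fref{boundgrad}, $|\nabla\varphi_\eps|\leq 2\big(|\nabla\varphi|+|\varphi|\,|\nabla\sqrt n|/\sqrt n\big)$; the second term lies in $L^2(\Rm^d)$ precisely because $\varphi\in H^1_\omega(\Rm^d)$ and $\omega\geq V_\star\geq|\nabla\sqrt n|^2/n$ (recalling that $-n[\vs\log\vs]\geq 0$). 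For your cutoff the same mechanism applies: $\nabla\varphi_M$ has, besides $\chi(\cdot)\nabla\varphi$, the term $\varphi\,\chi'(|\varphi|^2/(M^2 n))\big(2\Re(\varphi^*\nabla\varphi)/(M^2n)-|\varphi|^2\nabla n/(M^2 n^2)\big)$. On the support of $\chi'$ one has $|\varphi|^2/(M^2 n)\leq 1$, so the first contribution is bounded by $2\|\chi'\|_\infty|\nabla\varphi|$, and the second by $2\|\chi'\|_\infty\,|\varphi|\,|\nabla\sqrt n|/\sqrt n$, both in $L^2$ uniformly in $M$. With this in hand, dominated convergence for the integrals defining $\calQ_\star$ and Fatou for $\|\cdot\|_\frakH^2$ complete your argument. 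So your route is viable, but as written it contains an unresolved gap exactly at the truncation step; the paper's algebraic choice of regularizer is the clean way to make this step automatic.
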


 \begin{proof} For $t\geq 0$ and $\varphi \in H^1_\omega(\Rm^d)$, consider the operator
$$
\varrho_\eta(t)= a_\eta(t) \left( \varrho_\eta+t P_\varphi \right) a_\eta(t), 
$$
where
$$
n_\eta(t)=n[\varrho_\eta+t P_\varphi], \qquad a_\eta(t)=\sqrt{\frac{n}{n_\eta(t)}}.
$$
We need to regularize $\varphi$ in order to have the estimate $|\varphi| \leq M \sqrt{n}$ and use Lemma \ref{lemS}. Let then $\varphi_\eps=\varphi \left(\frac{n}{n+\eps |\varphi|^2}\right)^{1/2}$. We can see that $ |\varphi_\eps| \leq \eps^{-1} \sqrt{n}$ and it follows from similar computations as in \eqref{eq:derivativephieps} that 
\begin{equation}\label{boundgrad}
|\nabla \varphi_{\varepsilon}| \leq 2\left(|\nabla \varphi| + \frac{|\nabla\sqrt{n}|}{\sqrt{n}} |\varphi|\right),
\end{equation}
which gives $\varphi_\eps \in H^1_\omega(\Rm^d)$. Denoting by $\varrho_\eps(t)$ (and dropping the dependency on $\eta_\ell$ to ease notation) the operator $\varrho_{\eta_\ell}(t)$ for $\varphi \equiv \varphi_\eps$, Lemma \ref{derivE} and \fref{entrel} then yield, since $\varrho_{\eta_\ell}$ is the minimizer, 
\be \label{Eulineq}
\left.    \frac{d F_{\eta_\ell}(\varrho_\eps(t))}{d t} \right|_{t=0^+}= \calQ_{\eta_\ell}(\varphi_\eps)+\Tr\big( \log(\eta_\ell+\varrho_{\eta_\ell}) P_{\varphi_\eps}\big) \geq 0.
\ee
We will pass to the limit in the above relation. We have first
$$
\Tr\big( \log(\eta_\ell+\varrho_{\eta_\ell}) P_{\varphi_\eps}\big) =\sum_{p\in \NN} |(\varphi_\eps,\phi_{p,\ell})|^2\log(\rho_{p,\ell}+\eta_\ell).
$$
Choose $\eta_\ell \in (0,1/2]$ and $p_0$ such that $\rho_{p} \leq 1/2$ for $p \geq p_0$. Since $\rho_{p,\ell} \to \rho_p$ by Lemma \fref{convreg} (iii), we can choose $\ell$ sufficiently large that $\rho_{p,\ell} \leq 1/2$ for $p \leq p_0$. Since $\varphi_\eps \to \varphi$ in $L^2(\Rm^d)$, it follows that, with Lemma \ref{convreg} (iii)-(iv),
$$
 \lim_{\eps \to 0} \lim_{\ell \to \infty}\sum_{p < p_0} |(\varphi_\eps,\phi_{p,\ell})|^2\log(\rho_{p,\ell}+\eta_\ell)=\sum_{p < p_0} |(\varphi,\phi_p)|^2\log(\rho_p).
$$
Note that $\log(\rho_p)$ is well defined according to Proposition \ref{fullrank}. Then, since 
$-\log(\eta_\ell+\rho_{p,\ell}) \geq 0$ for $p \geq p_0$, it follows from Fatou's lemma for series that 
$$
-\sum_{p \geq p_0} |(\varphi,\phi_p)|^2\log(\rho_p)\leq -\liminf_{\eps \to 0, \ell \to \infty} \sum_{p\geq p_0 } |(\varphi_\eps,\phi_{p,\ell})|^2\log(\rho_{p,\ell}+\eta_\ell).
$$

It remains now to pass to the limit in $\calQ_{\eta_\ell}(\varphi_\eps)$. The limit in $\eta_\ell$ is done in the exact same way as in the proof of Proposition \ref{eigeq}, we simply use $|\varphi_\eps| \leq \eps^{-1}\sqrt{n}$ instead of \fref{bound1} in order to apply dominated convergence. We then replace all terms in $\eta_\ell$ by their limit and treat now the term in $\calQ_\star(\varphi_\eps)$ involving $\nabla \varphi_\eps$. We have that $\nabla \varphi_{\varepsilon}$ is given by \eqref{eq:derivativephieps} and converges to $\nabla \varphi$ a.e. as $\eps \to 0$. With the estimate \eqref{boundgrad} and the fact that the r.h.s is in $L^2(\Rm^d)$ because $\varphi \in H^1_\omega(\Rm^d)$, we can invoke dominated convergence and pass to the limit and obtain
$$
\lim_{\eps \to 0} \int_{\Rm^d} |\nabla \varphi_\eps|^2dx=\int_{\Rm^d} |\nabla \varphi|^2dx. 
$$
It remains to treat
$$  \int_{\Rm^d} \left(-\frac{\nabla \sqrt{n} \cdot \nabla |\varphi_\eps|^2}{\sqrt{n}}+\frac{2|\nabla \sqrt{n}|^2-k[\vs]}{n}  |\varphi_\eps|^2\right)dx. 
$$
Using that $\nabla \varphi_\eps \to \nabla \varphi$ and $\varphi_\eps \to \varphi$ both a.e., together with \fref{boundgrad} and $|\varphi_\eps| \leq |\varphi|$, we can use dominated convergence to pass to the limit above and obtain the desired result. This ends the proof.
  \end{proof}

  \subsection{Conclusion and proof of the main theorem}

 We have already obtained in Proposition \ref{fullrank} that $\vs$ is full rank. We prove now relation \fref{minieig}. According to Proposition \ref{estimH}, we have, since $\rho_p<1$ for all $p \in \NN$,
\be \label{below}
- \log (\rho_0)\|\varphi\|^2 \leq \|\varphi\|^2_\frakH \leq \calQ_\star(\varphi), \qquad \forall \varphi \in H^1_\omega(\Rm^d).
\ee
This shows in particular that $Q_\star$ is nonnegative and as a consequence
$$
\int_{\Rm^d} \frac{ k[\vs] }{n} |\varphi|^2 dx\leq \int_{\Rm^d} n \left| \nabla \left(\frac{\varphi}{\sqrt{n}}\right)\right|^2dx+ \int_{\Rm^d} V_\star |\varphi|^2 dx,
$$
which yields that $H^1_\omega(\Rm^d)=H_\star^1(\Rm^d)$. Besides, we deduce from \fref{below} that
$$
- \log (\rho_0) \leq \inf_{\varphi \in H^1_\star, \|\varphi\|=1} Q_\star(\varphi).
$$
According to Proposition \ref{eigeq}, we have $Q_\star(\phi_0)=- \log (\rho_0)$, and therefore the above infimum is attained at $\phi_0$. At any order $p>1$, we have, for any $\varphi \in \calK_p$, 
$$
- \log (\rho_p)\|\varphi\|^2 \leq \|\varphi\|^2_\frakH \leq \calQ_\star(\varphi) \qquad \textrm{so that} \qquad
- \log (\rho_p) \leq \inf_{\varphi \in \calK_p} Q_\star(\varphi),
$$
and, according to Proposition \ref{eigeq}, the infimum is attained at $\phi_p$. This proves \fref{minieig}. Consider now $\calQ_{\star,S}$, which is densely defined since $\{\phi_p\}_{p \in \Nm}$ is an orthonormal basis of $L^2(\Rm^d)$. Following Proposition \ref{eigeq}, we have
$$
\calQ_{\star,S}(u,v)=(u,v)_{\frakH}, \qquad \forall u,v \in S.
$$
Since $\frakH$ is complete, this shows that $\calQ_{\star,S}$ is closable and that
$$
\overline{\calQ}_{\star,S}(u,v)=(u,v)_{\frakH}=(u,- \log(\vs) v), \qquad \forall u \in \frakH, \; \forall v \in D(- \log(\vs)),
$$
where 
$$D(- \log(\vs))=\left\{ \varphi \in L^2(\Rm^d): \sum_{p \in \Nm} (\log (\rho_p))^2 |(\phi_p,\varphi)|^2 < \infty \right\}.$$
Finally, the fact that $H_\star^1(\Rm^d) \subset \frakH$ is a consequence of Proposition \ref{estimH}. This ends the proof of the theorem.

\section{Proofs of some lemmas} \label{lemproof}

\subsection{Proof of Lemma  \ref{derivE}} \label{proofderivE} 

\noindent\textbf{Step 1:} We show first that $\varrho(t) \in \calA$ for all $t\geq 0$. It is direct to see that  $\varrho(t)$ is nonnegative since $t \geq 0$, and that it is trace class as products of the trace class operator $\vs+t P_\varphi$ and the bounded multiplication operator by $a(t) \leq 1$. It then follows that 
\begin{equation*}
n[\varrho(t)] = a(t)^2(n + t |\varphi|^2) = n,
\end{equation*}
\textit{i.e.} $\varrho(t)$ satisfies the constraint $n[\varrho(t)]=n$

\noindent\textbf{Step 2:} We want to prove that $\varrho(t)\in\mathcal{E}$. For this, we write, by linearity of the trace,
  $$
  E(\varrho(t))= E(a(t) \vs a(t))+ t E(a(t) P_\varphi a(t)),
  $$
  and use the following result, which is just a consequence of the definition of Hilbert-Schmidt operators:
  \begin{lemma}
  Let $\sigma $ be a density operator and $A$ be a self-adjoint operator. Then $A \sigma A \in \calJ_1$ if and only if $A \sigma^{1/2} \in \calJ_2$.
  \end{lemma}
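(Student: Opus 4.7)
\proof The strategy is to introduce the auxiliary operator $B = A \sigma^{1/2}$ and recognize $A\sigma A$ as $BB^*$, then invoke the standard characterization of Hilbert--Schmidt operators in terms of trace class products.

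First I would observe that since $\sigma$ is a density operator, $\sigma^{1/2}$ is a well-defined, bounded, nonnegative self-adjoint operator on $L^2(\Rm^d)$ (indeed, it is a Hilbert--Schmidt operator since $\Tr(\sigma) < \infty$). In particular, $\sigma^{1/2}$ has dense range on the appropriate invariant subspace, and the operator $B := A \sigma^{1/2}$ is densely defined on $D(B) = \{\psi \in L^2(\Rm^d): \sigma^{1/2}\psi \in D(A)\}$. Since $\sigma^{1/2}$ is bounded and self-adjoint and $A$ is self-adjoint, one has $B^* \supset \sigma^{1/2} A$ on $D(A)$, and $B$ is closable with $\overline{B}^{\,*} B^{**} \supset \sigma^{1/2} A \sigma^{1/2}$; in the same spirit as the extension convention $\overline{\sqrt{H_0}|\varrho|\sqrt{H_0}}$ adopted throughout the paper, the composition $A\sigma A$ is to be understood as (the closure of) $\overline{B}\,\overline{B}^{\,*}$, i.e.\ $(\overline{B})(\overline{B})^*$.

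Next I would invoke the classical identity characterizing the Hilbert--Schmidt class: for any closable densely defined operator $B$, one has $\overline{B} \in \calJ_2$ if and only if $\overline{B}\,\overline{B}^{\,*} \in \calJ_1$, with $\|\overline{B}\|_{\calJ_2}^2 = \Tr(\overline{B}\,\overline{B}^{\,*})$. This follows directly from expanding the trace in an orthonormal basis:
$$
\Tr\bigl(\overline{B}\,\overline{B}^{\,*}\bigr) = \sum_{n} \|\overline{B}^{\,*} e_n\|^2,
$$
which is nothing other than $\|\overline{B}^{\,*}\|_{\calJ_2}^2 = \|\overline{B}\|_{\calJ_2}^2$. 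Applied to $B = A\sigma^{1/2}$, this yields
$$
A\sigma A \in \calJ_1 \quad \Longleftrightarrow \quad \overline{B}\,\overline{B}^{\,*} \in \calJ_1 \quad \Longleftrightarrow \quad A\sigma^{1/2} \in \calJ_2,
$$
which is the desired equivalence.

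The only potential obstacle is bookkeeping with the unbounded operator $A$ (typically $A = \sqrt{H_0}$), since neither $A\sigma^{1/2}$ nor $A\sigma A$ is a priori defined on all of $L^2(\Rm^d)$. The point is that $\sigma^{1/2}$ is a contraction up to the factor $\|\sigma\|^{1/2}$, so $B$ and $B^*$ are both closable, and the extension used in the paper's norm $\|\varrho\|_{\calE}$ coincides with $\overline{B}\,\overline{B}^{\,*}$; this identification is what makes the equivalence $BB^* \in \calJ_1 \iff B \in \calJ_2$ directly applicable here.
\endproof
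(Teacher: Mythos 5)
Your proof is correct and is precisely the argument the paper has in mind when it says the lemma is ``just a consequence of the definition of Hilbert--Schmidt operators'': write $A\sigma A = BB^*$ with $B = A\sigma^{1/2}$ and invoke the standard equivalence $BB^* \in \calJ_1 \iff B \in \calJ_2$, which follows from $\Tr(BB^*) = \sum_n\|B^*e_n\|^2 = \|B\|_{\calJ_2}^2$. One small slip in an auxiliary line: ``$\overline{B}^{\,*} B^{**} \supset \sigma^{1/2} A \sigma^{1/2}$'' should read $\overline{B}^{\,*} B^{**} = B^*\overline{B} \supset \sigma^{1/2} A^2 \sigma^{1/2}$, but this line is not used --- the operative identification you actually invoke, namely $A\sigma A = \overline{B}\,\overline{B}^{\,*}$, is stated correctly immediately afterward, so the argument stands.
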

  According to \cite[Theorem 6.22, item (g)]{RS-80-I}, $A \sigma^{1/2} \in \calJ_2$ holds provided there is an othonormal basis $\{e_p\}_{p \in \Nm}$ of $L^2(\Rm^d)$ such that
  $$
  \sum_{p \in \Nm} \| A \sigma^{1/2} e_p\|^2 < \infty.
  $$
  We use this result with $A=\sqrt{H_0}$ and $\sigma=a(t) \vs a(t)$ as follows. Noticing first that $$
  \left\| \sqrt{H_0} \left(a(t) \vs^{1/2} \phi_p \right) \right\|= \left\| \nabla \left(a(t) \vs^{1/2} \phi_p \right)\right\|,$$
  we have
  \begin{align} \label{EE1}
  E(a(t) \vs a(t))&=\sum_{p \in \Nm}  \left\| \sqrt{H_0} \left(a(t) \vs^{1/2} \phi_p \right) \right\|^2 = \sum_{p \in \Nm} \rho_p  \| \nabla \left( a(t) \phi_p \right) \|^2\\
 \nonumber &= \int_{\Rm^d} \sum_{p \in \Nm} \rho_p \left( |\nabla a(t)|^2 |\phi_p|^2+ 2 a(t) \nabla a(t) \cdot \Re(\phi_p^* \nabla \phi_p ) + a(t)^2 |\nabla \phi_p|^2\right) dx\\
  \nonumber &= \int_{\Rm^d} \left( n |\nabla a(t)|^2+ a(t )\nabla a(t) \cdot \nabla n+ a(t)^2 k[\vs]\right)dx.
  \end{align}
  Above, the exchange of the integral and the summation is justified since the integrand is positive. We also used the fact that
  $$
    n= \sum_{p \in \Nm} \rho_p |\phi_p|^2, \quad 
\nabla n=2 \sum_{p \in \Nm} \rho_p  \Re (\phi_p^* \nabla \phi_p), \quad\textrm{and}\quad  k[\vs]= \sum_{p \in \Nm} \rho_p |\nabla \phi_p|^2,
  $$
  with convergence of the series in $L^1(\Rm^d)$ and almost everywhere. Regarding $E(a(t) P_\varphi a(t))$, we have
  $$
  E(a(t) P_\varphi a(t))=\| \nabla (a(t) \varphi) \|^2 = \| \varphi \nabla a(t)\|^2 +  \| a(t)\nabla \varphi \|^2 + 2\Re (\varphi \nabla a(t) , a(t)\nabla \varphi),
  $$
  which also gives 
  \begin{align*}
  \int_{\Rm^d} &\left( n |\nabla a(t)|^2+ a(t )\nabla a(t) \cdot \nabla n\right)dx + t  E(a(t) P_\varphi a(t))
  \\ &=  \int_{\Rm^d} \left( (n(t) \nabla a(t)+ a(t) \nabla n(t))\cdot \nabla a(t) + ta(t)^2|\nabla\varphi|^2 \right)dx.
  \end{align*}
Thus, we find the expression
  \begin{equation*}
  E(\varrho(t))=  \int_{\Rm^d} \left( (n(t) \nabla a(t)+ a(t) \nabla n(t))\cdot \nabla a(t)+ a(t)^2 (k[\vs]+t |\nabla \varphi|^2) \right)dx.
  \end{equation*}
  Note that it is crucial to express $E(\varrho(t))$ in terms of the moments of $\vs$ in order to exploit the fact that $\varphi \in H^1_\omega(\Rm^d)$. Working directly with $\vs$ and $a(t)$ as operators would make it difficult to justify the calculations leading to \fref{derivEeq}. From there, with the relation
  $$
  \nabla a(t)= \frac{\nabla \sqrt{n}}{\sqrt{n(t)}}-\frac{a(t)\nabla\sqrt{n(t)} }{\sqrt{n(t)}},
  $$
  we find
  \begin{align*}
  (n(t) \nabla a(t) + &a(t) \nabla n(t))\cdot \nabla a(t) 
  \\ &=   \left(\sqrt{n(t)} \nabla \sqrt{n} - \sqrt{n(t)} a(t)\nabla\sqrt{n(t)} + a(t) \nabla n(t)\right)\cdot \nabla a(t)
  \\ &=   \left(\nabla \sqrt{n} - a(t)\nabla\sqrt{n(t)} + a(t) \frac{\nabla n(t)}{\sqrt{n(t)}}\right)\cdot (\nabla \sqrt{n} -a(t) \nabla\sqrt{n(t)})
  \\ &=   \left(\nabla \sqrt{n} + a(t)\nabla\sqrt{n(t)} \right)\cdot (\nabla \sqrt{n} -a(t) \nabla\sqrt{n(t)})
  \\ &=   |\nabla \sqrt{n}|^2 - a(t)^2|\nabla\sqrt{n(t)}|^2,
  \end{align*}
  leading to
  \begin{align} \label{eqE}
  E(\varrho(t)) =  \int_{\Rm^d} \left( |\nabla \sqrt{n}|^2 - a(t)^2|\nabla\sqrt{n(t)}|^2 + a(t)^2 (k[\vs]+t |\nabla \varphi|^2) \right)dx.
  \end{align}
  We remark that
  \begin{align*}
  \left|\nabla\sqrt{n(t)}\right|\leq \left|\nabla \sqrt{n}\right| + \sqrt{t} \left|\nabla \varphi\right|,
  \end{align*}
  so that $\nabla\sqrt{n(t)}\in (L^2(\mathbb{R}^d))^d$ for any $t\geq 0$. Since $a(t)$ bounded by one, $\nabla \sqrt{n}$ and $\nabla \varphi$ are in $(L^2(\Rm^d))^d$ and $k[\vs]$ is in $L^1(\Rm^d)$, it follows that  $E(\varrho(t))$ is finite and, as a consequence, that $\varrho(t) \in \calE$.
 
 \noindent\textbf{Step 3:} We show that $t\to E(\varrho(t))$ is differentiable. First of all, it is clear that $a(t)$ and $\sqrt{n(t)}$ are continuously differentiable as functions of $t$, for almost all $x$. Denote by $a_1(t,x)$ the integrand in $E(\varrho(t))$, which is then continuously differentiable as a function of $t$. With the following relations
$$
\partial_t a(t)=-\frac{a(t) |\varphi|^2}{2n(t)}
 \quad \textrm{and} \quad \partial_t \nabla \sqrt{n(t)}=\frac{\Re(\varphi^* \nabla \varphi)}{\sqrt{n(t)}}-\frac{|\varphi|^2 \nabla \sqrt{n(t)}}{2 n(t)},
$$
tedious but straightforward calculations show that
$$
|\partial_t a_1(t)| \leq C \left( \frac{|\nabla\sqrt{n}|^2}{n} |\varphi|^2 +|\nabla \varphi|^2+ \frac{k[\vs]}{n}|\varphi|^2\right).
$$
Since $\varphi\in H^1_\omega(\Rm^d)$, the function of the r.h.s above is integrable, and standard results about Lebesgue integration imply then that $E(\varrho(t)) \in C^1(\Rm^+)$.

\noindent\textbf{Step 4:} We consider now \fref{derivEeq}. Differentiating $E(\varrho(t))$ leads to
\begin{align*}
\partial_t E(\varrho(t)) =&  - 2\int_{\Rm^d}a(t) \left(  |\nabla\sqrt{n(t)}|^2 \partial_t a(t) + a(t) \nabla \sqrt{n(t)} \cdot \partial_t  \nabla \sqrt{n(t)}\right) dx\nonumber
\\    &+\int_{\Rm^d} \left(  2 a(t)\partial_t a(t) (k[\vs]+t |\nabla \varphi|^2) + a(t)^2|\nabla \varphi|^2  \right)dx\nonumber
\\ =&  \int_{\Rm^d}a(t)^2 \left( 2 \frac{|\nabla\sqrt{n(t)}|^2}{n(t)} |\varphi|^2 - \frac{\nabla \sqrt{n(t)}}{\sqrt{n(t)}} \cdot  \nabla |\varphi|^2\right) dx\nonumber
\\    &+\int_{\Rm^d}a(t)^2  \left( |\nabla \varphi|^2 - \frac{|\varphi|^2}{n(t)} \left(k[\vs]+t |\nabla \varphi|^2\right)   \right)dx.
\end{align*}
Since
\begin{equation*}
a(0) = 1\quad\textrm{and}\quad n(0) = n,
\end{equation*}
we directly deduce that
\begin{align*}
\left.    \frac{d E(\varrho(t))}{d t} \right|_{t=0} =&  \int_{\Rm^d} \left( |\nabla \varphi|^2 + 2 \frac{|\nabla\sqrt{n}|^2}{n} |\varphi|^2 - \frac{\nabla \sqrt{n}}{\sqrt{n}} \cdot  \nabla |\varphi|^2 - \frac{|\varphi|^2}{n} k[\vs]\right) dx,
\end{align*}
which is the desired result.
 \subsection{Proof of Lemma \ref{C1}.} 
 First of all,  it is clear that for any $t \geq 0$ and $x\in\mathbb{R}^d$, $$a(t,x)=\frac{\sqrt{n(x)}}{\sqrt{n(x)+t |\varphi(x)|^2}}$$ is bounded by one. Furthermore, we have
\begin{equation*}
\partial_t a(t)=-\frac{a(t) |\varphi|^2}{2 n(t)}
\end{equation*}
% \begin{equation*}
% \partial_t a(t)=-\frac{a(t) |\varphi|^2}{n(t)}\quad\textrm{and}\quad \partial_t^2 a(t) = 3\frac{a(t)|\varphi|^4}{n(t)^2},
% \end{equation*}
which is bounded by a $M^2/2$ according to the assumption on $\varphi$. With
\begin{equation*}
\partial_t \varrho(t)= a(t) P_\varphi a(t)+ \partial_t a(t) (\vs+t P_\varphi) a(t)+ a(t) (\vs+t P_\varphi) \partial_t  a(t), 
\end{equation*}
% and
% \begin{align*}
% \partial_t^2 \varrho(t) = \hspace{0.25em}& \partial_ta(t) P_\varphi a(t)+a(t) P_\varphi \partial_ta(t) + 2\partial_t a(t) (\vs+t P_\varphi) \partial_t a(t)
% \\ &+ \partial_t^2 a(t) (\vs+t P_\varphi) a(t)+  a(t) (\vs+t P_\varphi) \partial_t^2a(t),
% \end{align*}
it follows by inspection that $\varrho(t) \in C^1(\Rm^+,\calJ_1)$ since $\vs$ and $P_\varphi$ are trace class and $a$ and $\partial_t a$ are bounded and continuous w.r.t $t$.

\subsection{Proof of Lemma \ref{convreg}} \label{proofconvreg}

The proof will use the following two ingredients: the first one is the logarithmic Sobolev \fref{logsobo}, and the second is the following Lemma proved in \cite[Lemma 3.1]{MP-JSP}, providing us with compactness results for sequences of density operators bounded in $\calE$.
\begin{lemma} \label{strongconv}
Let $(\varrho_k)_{k\in \NN}$ be  a bounded sequence of $\calE_+$. Then, up to an extraction of a subsequence, there exists $\varrho\in \calE_+$ such that
\be
\label{first}
\varrho_k\to\varrho\mbox{ in }\calJ_1\quad \mbox{as } k\to +\infty,
\ee
and
\be
\label{second}
\Tr(\sqrt{H_0}\varrho\sqrt{H_0})\leq \liminf_{k\to +\infty} \Tr(\sqrt{H_0}\varrho_k\sqrt{H_0}).
\ee
Furthermore, if one has
$$\Tr(\sqrt{H_0}\varrho\sqrt{H_0})= \lim_{k\to +\infty} \Tr(\sqrt{H_0}\varrho_k\sqrt{H_0}),$$
then one can conclude in addition that
\be
\label{third}
\sqrt{H_0}\sqrt{\varrho_k}\to \sqrt{H_0}\sqrt{\varrho_k}\mbox{ in }\calJ_2\quad \mbox{and}\quad\sqrt{H_0}\varrho_k\sqrt{H_0} \to \sqrt{H_0}\varrho\sqrt{H_0} \mbox{ in }\calJ_1 \quad \mbox{as } k\to +\infty.
\ee
\end{lemma}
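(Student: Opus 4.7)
The plan is to first extract a weakly-$\ast$ convergent subsequence in $\calJ_1$, identify the limit in $\calE_+$ with lower semicontinuity of the kinetic energy, upgrade to strong $\calJ_1$ convergence via a trace-convergence criterion, and finally deduce the conditional strong convergences by a Radon--Riesz argument in the Hilbert space $\calJ_2$.

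Since $\calJ_1$ is isometrically the dual of the compact operators $\calK(L^2(\Rm^d))$ and $\{\varrho_k\}$ is bounded in $\calJ_1$, Banach--Alaoglu yields a subsequence (not relabeled) with $\varrho_k \rightharpoonup \varrho$ weak-$\ast$ in $\calJ_1$. Testing against rank-one operators $\ket{\phi}\bra{\psi}$ (which are compact) shows that $\varrho$ is self-adjoint and nonnegative. For the kinetic energy, pick an orthonormal basis $\{e_p\}_{p\in\Nm}$ of $L^2(\Rm^d)$ contained in $H^2(\Rm^d)$; each operator $\sqrt{H_0}\ket{e_p}\bra{e_p}\sqrt{H_0}$ is of rank one and hence compact, so $(\sqrt{H_0}e_p, \varrho_k \sqrt{H_0} e_p) \to (\sqrt{H_0}e_p, \varrho \sqrt{H_0} e_p)$. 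Fatou's lemma on the resulting series of nonnegative terms then delivers \fref{second} and shows $\varrho \in \calE_+$.

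To upgrade to strong $\calJ_1$ convergence, I would invoke a version of Gr\"umm's theorem: for a sequence of nonnegative trace-class operators converging weakly-$\ast$ to a nonnegative trace-class $\varrho$, convergence in $\calJ_1$ is equivalent to $\Tr(\varrho_k) \to \Tr(\varrho)$. The task therefore reduces to establishing this trace convergence. The Hoffmann--Ostenhof inequality $\int_{\Rm^d}|\nabla \sqrt{n[\varrho_k]}|^2\, dx \leq \Tr(\sqrt{H_0}\varrho_k\sqrt{H_0})$, together with $\|n[\varrho_k]\|_{L^1} = \Tr(\varrho_k) \leq C$, implies that $\sqrt{n[\varrho_k]}$ is bounded in $H^1(\Rm^d)$. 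Rellich--Kondrachov then produces a further subsequence such that $n[\varrho_k] \to n_\infty$ in $L^1_{\mathrm{loc}}$ and almost everywhere, while testing the weak-$\ast$ convergence on compact operators of the form $\varphi(1+H_0)^{-1}$ with $\varphi \in C_c^\infty(\Rm^d)$ identifies $n_\infty = n[\varrho]$. The delicate step is ruling out escape of mass at infinity; in the periodic setting of \cite{MP-JSP} this is automatic by compactness of the domain, and in the $\Rm^d$ applications within Section \ref{proofmainth} it is inherited from the fixed-density constraint $n[\varrho_k]=n$.

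For the conditional statement, set $A_k := \sqrt{H_0}\sqrt{\varrho_k}$, which is bounded in $\calJ_2$ with squared Hilbert--Schmidt norm $\Tr(\sqrt{H_0}\varrho_k\sqrt{H_0})$. Extracting a weakly convergent subsequence $A_k \rightharpoonup A$ in $\calJ_2$, one identifies $A = \sqrt{H_0}\sqrt{\varrho}$ by combining the Powers--St\o rmer inequality $\|\sqrt{\varrho_k}-\sqrt{\varrho}\|_{\calJ_2}^2 \leq \|\varrho_k - \varrho\|_{\calJ_1}$, which forces $\sqrt{\varrho_k} \to \sqrt{\varrho}$ in $\calJ_2$, with the weak $\calJ_2$ limit identification. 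The hypothesis $\Tr(\sqrt{H_0}\varrho_k\sqrt{H_0}) \to \Tr(\sqrt{H_0}\varrho\sqrt{H_0})$ reads $\|A_k\|_{\calJ_2} \to \|A\|_{\calJ_2}$, and since $\calJ_2$ is a Hilbert space the Radon--Riesz property gives strong convergence $A_k \to A$ in $\calJ_2$. Writing $\sqrt{H_0}\varrho_k\sqrt{H_0} = A_k A_k^\ast$ and using continuity of the bilinear product $\calJ_2 \times \calJ_2 \to \calJ_1$ yields $\sqrt{H_0}\varrho_k\sqrt{H_0} \to \sqrt{H_0}\varrho\sqrt{H_0}$ in $\calJ_1$, concluding the proof. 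The principal obstacle throughout is the tightness issue in the upgrade step.
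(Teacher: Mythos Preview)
The paper does not actually prove this lemma; it cites \cite[Lemma 3.1]{MP-JSP}, which is set in a bounded periodic domain. Your proposal is therefore the only argument on the table, and the overall strategy---weak-$\ast$ compactness in $\calJ_1$, Fatou for lower semicontinuity of the kinetic energy, a Gr\"umm/Dell'Antonio--type criterion (weak-$\ast$ plus trace convergence implies $\calJ_1$ convergence for nonnegative operators), and Radon--Riesz in the Hilbert space $\calJ_2$ for the conditional part---is the standard and correct route.

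More importantly, you have put your finger on a real defect in the lemma \emph{as transcribed to $\Rm^d$ in this paper}: without a tightness hypothesis the statement is false. Translates $\varrho_k = \ket{\psi(\cdot - k e_1)}\bra{\psi(\cdot - k e_1)}$ with $\psi \in H^1(\Rm^d)$, $\|\psi\|=1$, are bounded in $\calE_+$, converge weak-$\ast$ to zero, and have constant trace one, so no subsequence can converge in $\calJ_1$. In the periodic setting of \cite{MP-JSP} this cannot happen, and in every application within Section~\ref{proofmainth} (namely Lemma~\ref{convreg}) the sequences satisfy $n[\varrho_k]=n$, whence $\Tr(\varrho_k)=\|n\|_{L^1}$ is constant and your Hoffmann--Ostenhof/Rellich argument upgrades local $L^1$ convergence of densities to the required trace convergence. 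Your diagnosis that ``the tightness issue in the upgrade step'' is the principal obstacle, and that it is supplied by the density constraint in the applications, is exactly right.

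For the conditional statement your argument is clean: Powers--St\o rmer gives $\sqrt{\varrho_k}\to\sqrt{\varrho}$ in $\calJ_2$, which lets you identify the weak $\calJ_2$ limit of $A_k=\sqrt{H_0}\sqrt{\varrho_k}$ by testing against finite-rank $B$ with range in $D(\sqrt{H_0})$; the hypothesis then reads $\|A_k\|_{\calJ_2}\to\|A\|_{\calJ_2}$, Radon--Riesz yields $A_k\to A$ strongly in $\calJ_2$, and the bilinear continuity $\calJ_2\times\calJ_2\to\calJ_1$ finishes \fref{third}.
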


\noindent\textbf{Step 1:} We start with $(i)$. Let $\sigma=\ket{\sqrt{n}} \bra{\sqrt{n}}$. Since $\sigma \in \calA$, we have
$$
F_\eta(\varrho_\eta) \leq F_\eta(\sigma).
$$
Besides, \fref{regbet} yields $F(\varrho_\eta)\leq F_\eta(\varrho_\eta)$. It follows from \fref{logsobo} and $\Tr(\varrho_\eta)=\|n\|_{L^1}=1$ that
\begin{equation*}
\int_{\Rm^d}n(x) \log(n(x))\,  dx-\frac{d}{2}\log\left(\frac{e E(\varrho_\eta)}{2\pi d}\right) - 1 \leq S(\varrho_{\eta}),
\end{equation*}
This gives the estimate
$$
\int_{\Rm^d}n(x) \log(n(x))\,  dx-\frac{d}{2}\log\left(\frac{e E(\varrho_\eta)}{2\pi d}\right) -1 +E(\varrho_\eta) \leq F(\varrho_\eta)\leq F_\eta(\sigma)= \|\nabla \sqrt{n}\|^2+\beta_\eta(1),
$$
which shows that 
\be \label{boundE}
E(\varrho_\eta) \leq M,
\ee
for some $M>0$ independent of $\eta$. Together with $\Tr(\varrho_\eta)=1$, we can apply Lemma \ref{strongconv} and find a subsequence $\varrho_k:=\varrho_{\eta_{\ell_k}}$ (recall that $\eta=\eta_\ell=1/\ell$) and a $\varrho$ satisfying the convergence results of Lemma \ref{strongconv}.

\noindent\textbf{Step 2:} We identify now $\varrho$ with $\vs$. For this, we remark first that $\vs \in \calA$, and therefore,
\be \label{ineqF}
F_\eta(\varrho_\eta) \leq F_\eta(\vs).
\ee
Furthermore, it is proven  in \cite[Step 6 in Section 3]{MP-KRM}, that $S(\varrho_k) \to S(\varrho)$ as $k \to \infty$, and the proof can be directly adapted to yield that
\be \label{geneconv}
\Tr\big(|\beta_{\eta_{\ell_k}}(\varrho_\ell)|\big) \underset{k\to\infty}\to \Tr\big(|\beta(\varrho)|\big) \qquad \textrm{and} \qquad \Tr\big(|\beta_{\eta_{\ell_k}}(\vs)|\big) \underset{k\to\infty}\to \Tr\big(|\beta(\vs)|\big).
\ee
As a consequence, we obtain from \fref{ineqF}  and \fref{second} of Lemma \ref{strongconv} that
$$
F(\varrho) \leq F(\vs),
$$
which, by uniqueness of the minimizer, yields $\varrho=\vs$. This also implies that the entire sequence $\{\varrho_{\eta_\ell}\}_{\ell \in \NN^*}$, denoted with an abuse of notation by $\{\varrho_{\ell}\}_{\ell \in \NN^*}$, converges to $\varrho_\star$. Then, (i) and (ii) follow from Lemma \ref{strongconv} by replacing $\varrho$ by $\vs$. Furthermore, (iii) and the first result of (iv) follow from \cite[Lemma A.2]{MP-JSP}. 

\noindent\textbf{Step 3:} We address now the second result of (iv). From \fref{boundE}, we have 
$$
\rho_{p,\eta} \| \nabla \phi_{p,\eta}\|^2 \leq \Tr  \big(\sqrt{H_0} \varrho_\eta\sqrt{H_0}\big) \leq M,
$$
which, together with (iii) and the first result of (iv), shows that $\sqrt{\varrho_{p,\ell}}  \phi_{p,\ell}$ converges weakly in $H^1(\Rm^d)$ to $\sqrt{\varrho_{p}}  \phi_{p}$. In order to obtain strong convergence, we remark that, according to (ii), $\sqrt{H_0} \sqrt{\varrho_\ell}$ converges to $\sqrt{H_0} \sqrt{\varrho}$ strongly in $\calJ_2$, so that 
$$
\rho_{p,\ell} \| \nabla \phi_{p,\ell}\|^2=\Tr \big(\sqrt{H_0} \sqrt{\varrho_\ell} P_{\phi_{p,\ell}} \sqrt{\varrho_\ell}
\sqrt{H_0}\big) \to \Tr \big(\sqrt{H_0} \sqrt{\varrho_\star} P_{\phi_{p}} \sqrt{\varrho_\star} \sqrt{H_0} \big)=\rho_{p} \| \nabla \phi_{p}\|^2.$$
Above, we used the fact that $P_{\phi_{p,\ell}} \to P_{\phi_{p}}$ strongly in $\calL(L^2(\Rm^d))$ because of the first result of (iv). Together with the weak convergence of $\sqrt{\varrho_{p,\ell}}\,  \nabla \phi_{p,\ell}$, this proves the second result of (iv).

\noindent\textbf{Step 4:} Regarding (v), as stated in \fref{geneconv}, we already have the convergence of $|\beta_{\eta_\ell}(\varrho_\ell)|$ in $\calJ_1$. With (iii) and (iv), it is then not difficult to obtain weak convergence of $\beta_{\eta_\ell}(\varrho_\ell)$ to $\beta(\vs)$ in the sense of operators, which, according to \cite[Theorem 2.21 and addendum H]{Simon-trace}, yields the convergence in $\calJ_1$.% Regarding (v), we first prove the weak convergences of $\beta_{\eta_\ell}(\varrho_\ell)$ to $\beta(\vs)$ in the sense of operators. Consider $\sigma\in\mathcal{L}(L^2)$ a bounded operator. We have
% \begin{align*}
% |\Tr(\beta_{\eta_\ell}(\varrho_\ell)\sigma) - \Tr(\beta(\vs)\sigma)| &= \left|\sum_{j\in\mathbb{N}} \beta_{\eta_\ell}(\rho_{j,\ell}) ( \phi_{j,\ell},\sigma \phi_{j,\ell} ) -   \beta(\rho_{j}) ( \phi_{j},\sigma \phi_{j} ) \right|
% \\ &\leq \|\sigma\|_{\mathcal{L}(L^2)}\sum_{j\in\mathbb{N}} \left| \beta_{\eta_\ell}(\rho_{j,\ell}) - \beta(\rho_{j}) \right|
% \\ &\hspace{1em}+2\|\sigma\|_{\mathcal{L}(L^2)}\sum_{j\in\mathbb{N}}\beta(\rho_{j}) \|\phi_{j,\ell} - \phi_j\|.
% \end{align*}
% It follows from \eqref{geneconv} that the first term converges to zero. For the second term, we use the dominated convergence theorem to deduce that it also converges to zero. As stated in \fref{geneconv}, we already have the convergence of $|\beta_{\eta_\ell}(\varrho_\ell)|$ in $\calJ_1$. Hence, according to \cite[Theorem 2.21 and addendum H]{Simon-trace}, this yields the convergence in $\calJ_1$, proving (v).

\noindent\textbf{Step 5:} Finally, for (vi), we write
$$
x \log(\eta+x)-x \log(x) =\beta_\eta(x)-\beta(x)-g_\eta(x)\quad\textrm{with}\quad g_\eta(x)=\eta (\log (\eta)-\log(\eta+x)).
$$
Since $|g_\eta(x)| \leq |x|$ for all $x \geq 0$, we have that $g_\eta(\varrho_\eta) \in \calJ_1$. Then
$$
\|g_{\eta_\ell}(\varrho_\ell)\|_{\calJ_1}=\sum_{p \in \NN} |g_{\eta_\ell}(\rho_{p,\ell})| \underset{\ell \to \infty}\to 0, 
$$
as an application of generalized dominated convergence for series together with (iii), $g_{\eta_\ell}(\rho_{p,\ell}) \to 0$ and $\rho_p>0$ for all $p \in \Nm$. Hence, we have
$$
\| \varrho_\ell \log (\eta_\ell+\varrho_\ell)-\vs \log \vs\|_{\calJ_1} \leq \|\beta_{\eta_{\ell}}(\varrho_\ell)-\beta(\vs)\|_{\calJ_1}+\|g_{\eta_{\ell}}(\varrho_\ell)\|_{\calJ_1},
$$
which converges to zero as $\ell \to \infty$ according to (v). This ends the proof.

\subsection{Proof of Lemma \ref{rhoA}} \label{proofrhoa}
We can see that $a_\eta$ is bounded a.e. since, for any $t\in [-t_0,t_0]$,
  \be \label{bounda2}
  a_\eta^2(t) \leq \frac{\sum_{j \neq p,q} \rho_{j,\eta} |\phi_{j,\eta}|^2 + \rho_{p,\eta} |\phi_{p,\eta}|^2+\rho_{q,\eta} |\phi_{q,\eta}|^2}{\sum_{j \neq p,q} \rho_{j,\eta} |\phi_{j,\eta}|^2 + (\rho_{p,\eta}-|t|) |\phi_{p,\eta}|^2+(\rho_{q,\eta}-|t|) |\phi_{q,\eta}|^2} \leq 2.
  \ee
  This shows in particular that $\varrho_\eta(t)$ is trace class for $t \in [-t_0,t_0]$, and that $n[\varrho_\eta(t)]=n$. Furthermore, it is positive so that $\varrho_\eta(t)$ is a density operator for $t \in [-t_0,t_0]$. It remains to show that $\varrho_\eta(t) \in \calE$. We now follow the arguments of Step 2 of the proof of Lemma \ref{derivE}. We obtain first that
\begin{equation*}
E(a_{\eta}(t)\varrho_{\eta} a_{\eta}(t)) = \int_{\Rm^d} \left( n |\nabla a_{\eta}(t)|^2+ a_{\eta}(t )\nabla a_{\eta}(t) \cdot \nabla n+ a_{\eta}(t)^2 k[\varrho_{\eta}]\right)dx.
\end{equation*}
Furthermore, we have
\bee
E(a_{\eta}(t)P_{\eta} a_{\eta}(t)) &=& \int_{\Rm^d} 2\Re\left( \nabla(a_\eta(t)\phi_{p,\eta})\cdot \nabla(a_\eta(t)\phi_{q,\eta}^*)\right)dx\\
&=&  \int_{\Rm^d} \left(2\Re\left( \phi_{p,\eta}\phi_{q,\eta}^*\right)|\nabla a_\eta(t)|^2 + 2a_{\eta}(t )^2\Re\left( \nabla \phi_{p,\eta}\cdot \nabla\phi_{q,\eta}^*\right) \right)dx\\
&&+\int_{\Rm^d} 2a_{\eta}(t)\nabla a_{\eta}(t)\cdot \Re\left( \phi_{p,\eta} \nabla \phi_{q,\eta}^*+ \phi_{q,\eta}^*\nabla \phi_{p,\eta} \right) dx,
\eee
leading to
\begin{align*}
E(\varrho_{\eta}(t)) =\;&  \int_{\Rm^d}  (n_{\eta}(t) \nabla a_{\eta}(t)+ a_{\eta}(t ) \nabla n_{\eta}(t))\cdot\nabla a_{\eta}(t) dx
\\ &+ \int_{\Rm^d} \left( a_{\eta}(t)^2( k[\varrho_{\eta}]+ 2 t \Re\left( \nabla \phi_{p,\eta}\cdot \nabla\phi_{q,\eta}^*\right)\right)dx.
\end{align*}
By using the fact that
\begin{equation*}
\nabla a_{\eta}(t)= \frac{\nabla \sqrt{n}}{\sqrt{n_{\eta}(t)}}-\frac{a_{\eta}(t)\nabla\sqrt{n_{\eta}(t)} }{\sqrt{n_{\eta}(t)}},
\end{equation*}
we deduce the expression
\begin{equation}\label{eq:Evarrhoeta}
  E(\varrho_{\eta}(t)) =  \int_{\Rm^d} \left( |\nabla \sqrt{n}|^2 - a_{\eta}(t)^2\left|\nabla\sqrt{n_{\eta}(t)}\right|^2 + a_{\eta}(t)^2 \left(k[\varrho_{\eta}]+2 t \Re\left( \nabla \phi_{p,\eta}\cdot \nabla\phi_{q,\eta}^*\right)\right) \right)dx.
\end{equation}
Finally, since the following estimate holds for any $t\in[-t_0,t_0]$
\begin{equation*}
|n_\eta(t)| = |n + 2t\Re(\phi_{p,\eta}\phi_{q,\eta}^*)|\geq n/2,
\end{equation*}
we deduce that
\begin{align*}
\left|\nabla\sqrt{n_{\eta}(t)}\right| &\leq \frac{|\nabla n| + \rho_{p,\eta}|\phi_{p,\eta} \nabla \phi_{q,\eta} | + \rho_{q,\eta}|\phi_{q,\eta} \nabla \phi_{p,\eta} |}{n}
\\ &\leq \frac12|\nabla\sqrt{n}| + |\nabla \phi_{p,\eta} | + |\nabla \phi_{q,\eta} |,
\end{align*}
which enables us to bound each term of \eqref{eq:Evarrhoeta} since $\phi_{p,\eta}$ and $\phi_{q,\eta}$ belong to $H^1(\Rm^d)$ according to \fref{KE}. This shows that $\varrho_{\eta}(t)\in\mathcal{E}$ for $t\in[-t_0,t_0]$ and concludes the proof.

    \bibliographystyle{plain}
\bibliography{bibliography.bib}

\end{document}